\documentclass[11pt]{article}

\usepackage[hyphens]{url}
\usepackage[pagebackref,colorlinks]{hyperref}
\usepackage[hyphenbreaks]{breakurl}
\usepackage{amsmath} 
\usepackage{amsthm} 
\usepackage{amssymb}	
\usepackage{graphicx} 
\usepackage{multicol} 
\usepackage{multirow}
\usepackage{color}
\usepackage[dvips,letterpaper,margin=1in,bottom=1in]{geometry}
\usepackage[capitalize,noabbrev]{cleveref}

\usepackage{bm}

\usepackage{ytableau}

\usepackage[utf8]{inputenc}
\usepackage[english]{babel}

\usepackage[T1]{fontenc}
\AtBeginDocument{%
  \DeclareFontShape{T1}{cmr}{m}{scit}{<->ssub*cmr/m/sc}{}%
}

\usepackage{diagbox}
\usepackage{mathtools}

\newtheorem{theorem}{Theorem}[section]

\newtheorem{lemma}[theorem]{Lemma}
\newtheorem{corollary}[theorem]{Corollary}

\newtheorem{definition}[theorem]{Definition}

\newtheorem{remark}{Remark}[section]
\newtheorem{problem}{Problem}

\newcommand{\braket}[2]{\langle #1 | #2 \rangle}

\DeclarePairedDelimiter\rbra{\lparen}{\rparen}
\DeclarePairedDelimiter\sbra{\lbrack}{\rbrack}
\DeclarePairedDelimiter\cbra{\{}{\}}
\DeclarePairedDelimiter\abs{\lvert}{\rvert}
\DeclarePairedDelimiter\Abs{\lVert}{\rVert}
\DeclarePairedDelimiter\ceil{\lceil}{\rceil}

\DeclarePairedDelimiter\ket{\lvert}{\rangle}
\DeclarePairedDelimiter\bra{\langle}{\rvert}

\DeclareMathOperator*{\E}{\mathbb{E}}
\DeclareMathOperator*{\Var}{\mathbb{V}\!\!\;ar}

\newcommand{\tr} {\operatorname{tr}}

\newcommand{\supp} {\operatorname{supp}}
\newcommand{\spanspace} {\operatorname{span}}

\newcommand{\ketbra}[2]{\ensuremath{\ket{#1}\!\bra{#2}}}

\usepackage{latexsym}
\usepackage{CJK}

\usepackage{enumerate}

\usepackage{algorithm}
\usepackage{algpseudocode}

\usepackage{stmaryrd}
\usepackage{tabularx}
\usepackage{booktabs}
\usepackage{adjustbox}

\newcommand{\footremember}[2]{%
    \footnote{#2}
    \newcounter{#1}
    \setcounter{#1}{\value{footnote}}%
}

\usepackage{tikz}
\usetikzlibrary{quantikz2}

\begin{document}

\title{Estimating Fidelity to a Reference Quantum State}
\author{Qisheng Wang \footremember{2}{Qisheng Wang is with the School of Computer Science, Shanghai Jiao Tong University (e-mail: \url{QishengWang1994@gmail.com}).}}
\date{}

\maketitle

\begin{abstract}
    We consider the problem of estimating the fidelity of an unknown quantum state to a known reference state to within additive error $\varepsilon$. 
    We show that the sample complexity is $O(r^2/\varepsilon^2)$ with \textit{optimal} $\varepsilon$-dependence when the reference state is of rank $r$, improving the previous best $O(r^2\log^2(1/\varepsilon)/\varepsilon^4)$ due to \hyperlink{cite.UNWT25}{Utsumi, Nakata, Wang, and Takagi (QIP 2026)}. 
    We also provide a lower bound of $\Omega(r/\varepsilon^2)$, improving the previous best $\Omega(r/\varepsilon+1/\varepsilon^2)$, with implications to quantum query complexity. 
    Moreover, we further consider the case where the unknown state is of rank at most $r$ while the reference state can be arbitrary, for which the sample complexity is shown to be $O(r^2/\varepsilon^4)$. 
    As an application, we present an approach to tolerant quantum state certification, generalizing the exact certification studied in \hyperlink{cite.BOW19}{B\u{a}descu, O'Donnell, and Wright (STOC 2019)}. 
\end{abstract}

\newpage
\tableofcontents
\newpage

\section{Introduction}

The (Uhlmann) fidelity \cite{Uhl76,Joz94} is an information-theoretic quantity that measures the closeness between quantum states, which generalizes the Bhattacharyya coefficient between probability distributions \cite{Bat46}. 
The fidelity between two quantum states $\rho$ and $\sigma$ is defined by (see \cite[Equation (9.53)]{NC10})
\[
\mathrm{F}\rbra{\rho, \sigma} = \tr\rbra*{\sqrt{\sqrt{\sigma}\rho\sqrt{\sigma}}}. 
\]
This quantity has been widely used in quantum physics and is now a standard in quantum computation and quantum information (cf.\ \cite{NC10}). 
Many fundamental tasks such as quantum state tomography \cite{HHJ+17,OW16,SSW25,PSW25,PSTW25} and quantum state certification \cite{BOW19} are investigated with respect to the fidelity. 

Fidelity estimation therefore becomes an important problem in quantum computing. 
One of the earliest efficient approaches is the pure-state fidelity estimation based on the SWAP test \cite{BCWdW01}. 
However, it has been known that fidelity estimation is $\mathsf{QSZK}$-hard in general \cite{Wat02,Wat09}, which means that one cannot hope for a generally efficient approach to fidelity estimation unless $\mathsf{BQP} = \mathsf{QSZK}$.
Therefore, the consequent work in the literature focused on finding efficient approaches in certain useful cases.
In particular, it is known that low-rank fidelity estimation is $\mathsf{BQP}$-complete due to the results in \cite{WZC+23,RASW23}. 
From the perspective of computational complexity, a fundamental question is how many samples of quantum states (i.e., the sample complexity) are needed to estimate the fidelity between them.\footnote{There is another complexity measure, called quantum query complexity, that has been studied in the literature for fidelity estimation. We will review this line of research as related work in \cref{sec:related-work}.} 
There has been a series of work on the sample complexity of fidelity estimation. 

\begin{itemize}
    \item For pure states, the fidelity can be estimated to within additive error $\varepsilon$ with sample complexity $O\rbra{1/\varepsilon^4}$ through the SWAP test \cite{BCWdW01}, which was recently improved to $O\rbra{1/\varepsilon^2}$ in \cite{WZ24} matching the lower bound given in \cite{ALL22}. 
    There are other approaches with restricted measurements, including the direct fidelity estimation method \cite{FL11} with only Pauli measurements, the entanglement witness method \cite{GT09} with few measurements, and the distributed inner product estimation \cite{ALL22} with local operations and classical communication (LOCC) and its extensions \cite{AS25,GHYZ24,ZWY+25}. 

    \item For low-rank states, when the quantum states are of rank $r$, the fidelity can be estimated to within additive error $\varepsilon$ with sample complexity $\widetilde{O}\rbra{r^{5.5}/\varepsilon^{12}}$ \cite{GP22}, which was recently improved to $O(r^2\log^2(1/\varepsilon)/\varepsilon^4)$ in \cite{UNWT25}, whereas the prior best lower bound is $\Omega\rbra{r/\varepsilon+1/\varepsilon^2}$ due to \cite{OW21,ALL22}. 
    When the reciprocal of the smallest eigenvalue of the quantum states is $\kappa$, the fidelity can be estimated with sample complexity $\widetilde{O}\rbra{\kappa^9/\varepsilon^3}$ \cite{LWWZ25}, which was later improved to $O\rbra{\kappa^2\log^2\rbra{1/\varepsilon}/\varepsilon^2}$, almost matching the lower bound $\Omega\rbra{1/\varepsilon}$ in \cite{LWWZ25} for constant $\kappa$. 
\end{itemize}

\subsection{Main results}

In this paper, we consider the fidelity estimation where one quantum state $\sigma$ is known, called the ``reference'' state, while the other quantum state $\rho$ is unknown. 
The fidelity to a reference state turns out to be a common indicator for evaluating quantum state tomography \cite{FR16,WSR18} and entanglement detection \cite{EG23}. 
We formalize the problem as follows. 

\begin{problem} [Estimating fidelity to a reference] \label{prob:main}
    Given a known reference quantum state $\sigma$ (with its classical description) and an unknown quantum state $\rho$ (with its identical copies), the task is to estimate the fidelity $\mathrm{F}\rbra{\rho, \sigma}$ to within additive error $\varepsilon$ using as few samples of $\rho$ as possible.
\end{problem}

We present an approach to \cref{prob:main} that estimates the fidelity of an unknown quantum state to a reference state, with sample complexity depending on the rank of the reference state. 

\begin{theorem}[Fidelity estimation with a low-rank reference state, \cref{thm-11242222} and \cref{thm-240417} restated] \label{thm:fidelity-main}
    Given a known reference quantum state $\sigma$ of rank $r$ and an unknown quantum state $\rho$, the fidelity $\mathrm{F}\rbra{\rho, \sigma}$ can be estimated to within additive error $\varepsilon$ using $O\rbra{r^2/\varepsilon^2}$ samples of $\rho$. 
    Moreover, $\Omega\rbra{r/\varepsilon^2}$ samples of $\rho$ are necessary. 
\end{theorem}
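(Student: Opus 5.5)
Both parts reduce to a statement about the $r\times r$ matrix obtained by compressing $\rho$ onto the support of $\sigma$. Write $\sigma=\sum_{i=1}^r\lambda_i\ketbra{v_i}{v_i}$ and let $\Pi$ be the projector onto its support. Then
\[
\mathrm{F}(\rho,\sigma)=\tr\sqrt{A},\qquad A:=\sqrt{\sigma}\rho\sqrt{\sigma}=\sqrt{\sigma}\,(\Pi\rho\Pi)\,\sqrt{\sigma},
\]
so $\rho$ enters only through $M:=\Pi\rho\Pi$, an effectively $r\times r$ positive operator with $\tr M\le 1$.

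\textbf{Upper bound.} I plan to estimate the matrix $M$ and then output $\tr\sqrt{\sqrt{\sigma}\hat M\sqrt{\sigma}}$. This is a plug-in estimator.

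First, I want to learn $M$ in trace norm to error $\delta$ using $O(r^2/\delta^2)$ samples. To do this, measure each copy with the two-outcome POVM $\{\Pi,I-\Pi\}$. On the $\Pi$ outcome, the post-measurement state is $M/\tr M$, a state on an $r$-dimensional space. I then run standard full-rank tomography on the retained copies, which needs $O(r^2/\delta^2)$ copies for trace-norm accuracy (O'Donnell--Wright / Haah et al.). The normalization $\tr M$ is estimated from the measurement frequencies.

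Second, I need a continuity bound that costs only $\varepsilon$, not $\sqrt{\varepsilon}$. This is the step that avoids the $1/\varepsilon^4$ rate. The naive bound $\|\sqrt{A}-\sqrt{B}\|_1\le\sqrt{r}\|A-B\|_1^{1/2}$ is too weak. Instead I will use that $X\mapsto\tr\sqrt{\sqrt\sigma X\sqrt\sigma}$ is Lipschitz in an appropriate sense, via the variational formula
\[
\mathrm{F}(\rho,\sigma)=\max_{U}\abs*{\tr\rbra*{\sqrt{\rho}\,U\sqrt{\sigma}}}.
\]
Equivalently, I can use $\mathrm{F}$ as an infimum/supremum of affine functionals in $\rho$: Alberti's formula
\[
\mathrm{F}^2(\rho,\sigma)=\inf_{H>0}\tr(\rho H)\,\tr(\sigma H^{-1}),
\]
or $2\mathrm{F}=\inf_{H>0}\bigl(\tr(\rho H)+\tr(\sigma H^{-1})\bigr)$.

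The second form makes $\mathrm{F}$ concave in $\rho$, and $|\mathrm{F}(\rho,\sigma)-\mathrm{F}(\rho',\sigma)|$ is controlled by $\tfrac12|\tr((\rho-\rho')H^*)|$ at near-optimal $H^*$ for either state. The optimal $H^*$ is unbounded in general. I therefore expect the main obstacle to be controlling it: either truncate by mixing $\sigma$ with a small multiple of $\Pi/r$, or restrict to $H$ with $\|H\|\lesssim$ poly. If that fails, I will fall back on an $\ell_2$ estimate. Tomography in Frobenius norm with $O(r/\delta^2)$ copies gives $\|\hat M-M\|_2\le\delta$, and I would then need $|\mathrm{F}(M)-\mathrm{F}(\hat M)|\lesssim\sqrt{r}\,\|\hat M-M\|_2$ with the $\sqrt{\sigma}$ factors absorbing the singular behavior. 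Taking $\delta=\varepsilon/\sqrt r$ would again give $O(r^2/\varepsilon^2)$. I regard this Lipschitz-type bound with constant $O(1)$ in trace norm as the crux.

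\textbf{Lower bound.} I plan a reduction from a hard distinguishing problem. Take $\sigma=\Pi/r$, maximally mixed on $r$ dimensions, so that
\[
\mathrm{F}(\rho,\sigma)=\tfrac{1}{\sqrt r}\tr\sqrt{M}.
\]
I will construct two ensembles of $\rho$ (supported in $\Pi$) whose fidelity values differ by $2\varepsilon$:
\begin{itemize}
\item $\rho=\Pi/r$, which has fidelity $1$;
\item a random $\rho=(\Pi+\eta\,U D U^\dagger)/r$ with $D$ a random traceless $\pm1$ diagonal and $U$ Haar, whose eigenvalues $(1\pm\eta)/r$ give fidelity $\tfrac12(\sqrt{1+\eta}+\sqrt{1-\eta})\approx1-\eta^2/8$.
\end{itemize}
Hence $\eta\asymp\sqrt{\varepsilon}$. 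The standard mixedness-testing lower bound of O'Donnell--Wright says distinguishing such ensembles needs $\Omega(r/\eta^2)=\Omega(r/\varepsilon)$ copies, which is only the previous bound.

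To reach $r/\varepsilon^2$, I will instead perturb around a base state that is not maximally mixed, so that fidelity depends linearly on the perturbation. Concretely, take $\rho_0$ with half its eigenvalues near zero relative to $\sigma$, or fix $\sigma$ non-uniform so that $\tr\sqrt{\cdot}$ has nonvanishing first derivative along a Haar-random traceless direction. The fidelity gap is then $\Theta(\eta)$, so $\eta\asymp\varepsilon$, and the chi-square/Schur--Weyl bound for the random perturbation gives $\Omega(r/\eta^2)=\Omega(r/\varepsilon^2)$. Arranging a linear gap while keeping the random family indistinguishable is the delicate part of the construction.
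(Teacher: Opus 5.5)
\textbf{Upper bound.} Your estimator is the paper's: by homogeneity, $\tr\sqrt{\sqrt\sigma\,\hat M\sqrt\sigma}=\sqrt{\hat p}\,\mathrm{F}(\hat\rho,\sigma)$ for $\hat M=\hat p\,\hat\rho$. The gap is the step you call the crux, which is false. The map $X\mapsto\tr\sqrt{\sqrt\sigma X\sqrt\sigma}$ is not $O(1)$-Lipschitz in trace norm, nor $O(\sqrt r)$-Lipschitz in Frobenius norm.

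For $r=1$ and $\sigma=\ketbra{\psi}{\psi}$, we have $\mathrm{F}(\rho,\sigma)=\sqrt{\bra{\psi}\rho\ket{\psi}}$. This moves by $\sqrt\delta$ under a perturbation of size $\delta$ at $\bra{\psi}\rho\ket{\psi}=0$, so the $\sqrt\sigma$ factors absorb nothing. Nor is this an artifact of small $\tr(\Pi\rho)$: for $r=2$ and $\sigma=\Pi/2$, the states $\ketbra{1}{1}$ and $(1-\delta)\ketbra{1}{1}+\delta\ketbra{2}{2}$ are $\delta$-close in trace distance, yet their fidelities with $\sigma$ differ by about $\sqrt{\delta/2}$. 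This is the tightness of Fuchs--van de Graaf. Hence any analysis using only a worst-case trace-norm or Frobenius guarantee for $\hat\rho$ gives $|\Delta\mathrm{F}|\lesssim\sqrt\delta$, and therefore $O(r^2/\varepsilon^4)$ samples.

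Your Alberti route hits the same wall. Capping $\|H\|_\infty\le L$ makes the functional $L/2$-Lipschitz but introduces a bias of order $1/L$. Already for $r=1$, $\inf_{0<h\le L}(ph+1/h)-2\sqrt p\approx 1/L$ when $p\ll L^{-2}$. This forces $L\sim1/\varepsilon$ and $\delta\sim\varepsilon^2$. Mixing $\sigma$ with $\Pi/r$ does not help either, since the optimal $H$ blows up along small eigenvalues of $\rho$, not of $\sigma$.

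The missing ingredient is a tomography guarantee measured in the geometry of fidelity. Bures-distance tomography returns $\hat\rho$ with $\mathrm{D}_\mathrm{B}(\hat\rho,\rho_\Pi)\le\delta$ from $O(r^2/\delta^2)$ copies on the $r$-dimensional space $\supp(\sigma)$, the same cost as trace-norm tomography. The algorithm of \cite{HHJ+17} you cite already gives such an infidelity guarantee up to logarithms, and \cite{PSW25} removes them; it is that guarantee, not trace-norm accuracy, that you need. Since $\mathrm{D}_\mathrm{B}$ is a metric bounded by $\sqrt2$, we get $|\mathrm{F}(\hat\rho,\sigma)-\mathrm{F}(\rho_\Pi,\sigma)|=\tfrac12|\mathrm{D}_\mathrm{B}(\hat\rho,\sigma)^2-\mathrm{D}_\mathrm{B}(\rho_\Pi,\sigma)^2|\le\sqrt2\,\mathrm{D}_\mathrm{B}(\hat\rho,\rho_\Pi)$. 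This is exactly the linear continuity you wanted.

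The paper then takes $n=Cr^2/\varepsilon^2$ and $p=\tr(\Pi\rho)$, and Chebyshev gives $|\sqrt{m/n}-\sqrt p|\le 100/\sqrt n\le\varepsilon$. The case $\sqrt p\le 2\varepsilon$ is trivial. Otherwise the error is at most $\varepsilon+\sqrt{2p}\cdot\sqrt C r/\sqrt m=O(\varepsilon)$: the factor $\sqrt p$ from homogeneity cancels the $1/\sqrt{pn}$ loss from discarded copies.

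\textbf{Lower bound.} Your first construction gives only $\Omega(r/\varepsilon)$, as you note, and the second is not yet a construction. Worse, the mechanism you describe cannot create a gap. Take $\rho=\rho_0+\eta\,UDU^\dagger$ with $D$ traceless and $U$ Haar, and let $G$ be the gradient of $\mathrm{F}(\cdot,\sigma)$ at $\rho_0$. The first-order change $\eta\tr(G\,UDU^\dagger)$ has mean zero over $U$. So the random family straddles $\mathrm{F}(\rho_0,\sigma)$ instead of being separated from it by $\Theta(\eta)$. A deterministic first-order shift needs a perturbation tied to a fixed direction (e.g.\ changing $\tr(\Pi\rho)$). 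A fixed two-outcome measurement detects that with $O(1/\eta^2)$ copies, losing the factor $r$.

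The paper avoids perturbation theory and dilutes instead. Take $\sigma=\frac{1}{2r}\sum_{i=1}^{2r}\ketbra{i}{i}$ and $\rho_\varepsilon=(1-\varepsilon^2)\ketbra{0}{0}+\varepsilon^2\rho$, so that $\mathrm{F}(\rho_\varepsilon,\sigma)=\varepsilon\,\mathrm{F}(\rho,\sigma)$. The $\textsc{UnifDis}_r$ instances (spectrum uniform on $r$ versus $2r$ eigenvalues) have fidelities $1/\sqrt2$ versus $1$ and need $\Omega(r)$ copies by \cite{CHW07}. After dilution their fidelities are $\varepsilon/\sqrt2$ versus $\varepsilon$. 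Each copy of $\rho_\varepsilon$ consumes a copy of $\rho$ only with probability $\varepsilon^2$. An averaging/Markov argument then fixes a pattern using at most $100S\varepsilon^2$ copies of $\rho$, so an $S$-copy estimator with accuracy $\varepsilon/10$ forces $S=\Omega(r/\varepsilon^2)$.

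The diluted hypotheses are only $O(\varepsilon^2)$ apart in trace distance but $\Theta(\varepsilon)$ apart in fidelity. The lower bound thus runs on exactly the square-root behaviour that defeats your trace-norm Lipschitz bound.
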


Compared to \cref{thm:fidelity-main}, the prior best approach to fidelity estimation \cite{UNWT25} only gives a sample complexity upper bound of $O\rbra{r^2\log^2\rbra{1/\varepsilon}/\varepsilon^4}$ and the prior best sample complexity lower bound is $\Omega\rbra{r/\varepsilon+1/\varepsilon^2}$ due to \cite{OW21,ALL22} (as noted in \cite{UNWT25}). 
It is worth noting that \cref{thm:fidelity-main} gives the \textit{first} dimension-independent approach to mixed-state fidelity estimation  with optimal $\varepsilon$-dependence. 
In addition, the $r$-dependence of \cref{thm:fidelity-main} also matches the sample lower bound $\Omega\rbra{r^2}$ for fidelity estimation conjectured in \cite{CWZ25}. 

\paragraph{Extensions.}
Furthermore, we extend \cref{thm:fidelity-main} to the case where the unknown quantum state $\rho$ is guaranteed to be of rank $\leq r$ but the known reference quantum state $\sigma$ can be arbitrary. 

\begin{theorem}[Fidelity estimation for low-rank unknown states, \cref{thm-1131746} restated] \label{thm:fidelity-ext}
    Given a known reference quantum state $\sigma$ and an unknown quantum state $\rho$ that is guaranteed to be of rank $\leq r$, the fidelity $\mathrm{F}\rbra{\rho, \sigma}$ can be estimated to within additive error $\varepsilon$ using $O\rbra{r^2/\varepsilon^4}$ samples of $\rho$. 
\end{theorem}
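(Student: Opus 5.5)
Since $\rho$ has rank at most $r$ while $\sigma$ is arbitrary but fully known, the plan is to reduce to the setting of \cref{thm:fidelity-main} by replacing $\sigma$ with a low-rank surrogate adapted to the support of $\rho$. We cannot compress $\sigma$ directly, because its rank may be large. Instead, we first learn $\rho$ approximately by tomography. Using $O(r d/\delta^2)$ samples would introduce dimension dependence, so we take a different route.

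Write $\mathrm{F}(\rho,\sigma)=\mathrm{F}(\sigma,\rho)=\tr\sqrt{\sqrt{\rho}\sigma\sqrt{\rho}}$. Because $\rho$ has rank at most $r$, the operator $\sqrt{\rho}\sigma\sqrt{\rho}$ lives on the $r$-dimensional support $\Pi$ of $\rho$, so only the compression $\Pi\sigma\Pi$ matters. The approach is to estimate the $r\times r$ matrix $M=\sqrt{\rho}\sigma\sqrt{\rho}$ through a dimension-independent primitive, and then evaluate $\tr\sqrt{M}$ from the estimate.

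\textbf{Step 1.} We try to produce a block-encoding, or a sampling access, of $\sqrt{\rho}\,\sigma\sqrt{\rho}$ using copies of $\rho$. Since $\sigma$ is known, we can prepare a purification of $\sigma$ ourselves. A technique like density-matrix exponentiation or a quantum-sample-to-block-encoding conversion would then give polynomial access to $\rho$, with the cost of each use being one sample.

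\textbf{Step 2.} We estimate $\tr\sqrt{M}$ by the same polynomial-approximation and amplitude-estimation-free sampling scheme that underlies the upper bound in \cref{thm:fidelity-main}, with the roles of $\rho$ and $\sigma$ exchanged. The rank-$r$ structure now comes from the unknown state rather than the reference.

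The key loss is that the $\sqrt{\cdot}$ of the unknown state must itself be implemented from samples. Each query to a $\sqrt{\rho}$ block-encoding accurate to $\varepsilon$ costs an extra $1/\varepsilon^2$ in samples, for example via the sample-based block-encoding of $\rho$ followed by a polynomial for $\sqrt{x}$ restricted to eigenvalues at least $\varepsilon/r$. Eigenvalues below $\varepsilon/r$ contribute at most $O(\varepsilon)$ to the fidelity. Combining this $1/\varepsilon^2$ overhead with the $r^2/\varepsilon^2$ estimation cost gives $O(r^2/\varepsilon^4)$.

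\textbf{Expected obstacle.} The main difficulty is the error analysis. We must show that truncating small eigenvalues of $\rho$ and approximating $\sqrt{\rho}$ changes $\tr\sqrt{\sqrt{\rho}\sigma\sqrt{\rho}}$ by only $O(\varepsilon)$. The argument would go through operator monotonicity of the square root together with the bound $\abs{\tr\sqrt{A}-\tr\sqrt{B}}\le\sqrt{r}\,\Abs{A-B}_2^{1/2}$, or a bound of the form $\Abs{\sqrt{A}-\sqrt{B}}_1\le \Abs{A-B}_1^{1/2}\sqrt{r}$. We also have to keep the sample accounting dimension-free, using the trace-norm continuity of fidelity to absorb the truncation.
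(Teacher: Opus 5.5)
You have a genuine gap. The proposal never constructs an estimator, and it goes wrong at the sentence ``we cannot compress $\sigma$ directly, because its rank may be large''. Step~1 is only an intention (``we try to produce a block-encoding''), with no construction and no cost. Step~2 appeals to a polynomial-approximation sampling scheme behind \cref{thm:fidelity-main}, but no such scheme exists: that upper bound projects onto the \emph{known} support of $\sigma$ and runs Bures-distance tomography there. Exchanging the roles of $\rho$ and $\sigma$ would require projecting onto the unknown support of $\rho$, which is the very obstacle you started from, so Step~2 is circular. The final count multiplies an asserted ``$1/\varepsilon^2$ overhead'' by an asserted ``$r^2/\varepsilon^2$ estimation cost''; neither is derived. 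One concrete claim is also false: eigenvalues of $\rho$ below $\varepsilon/r$ can contribute far more than $O(\varepsilon)$. Take $r=2$, $\sigma=\ketbra{1}{1}$, and $\rho=(1-\mu)\ketbra{0}{0}+\mu\ketbra{1}{1}$ with $\mu$ just below $\varepsilon/2$. Then $\mathrm{F}(\rho,\sigma)=\sqrt{\mu}\approx\sqrt{\varepsilon/2}\gg\varepsilon$, and all of it comes from the discarded eigenvalue. The correct threshold is $\varepsilon^2/r$, so that the discarded part has trace at most $\varepsilon^2$ and fidelity contribution at most $\varepsilon$. That change alters the degree and cost of any polynomial approximating $\sqrt{x}$, which invalidates your accounting.

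The missing idea is that a known $\sigma$ \emph{can} be compressed whatever its rank. Only its small eigenvalues need to be dropped, and the rank bound on $\rho$ controls their effect. Let $\Pi$ project onto the top $K=\ceil{r/\varepsilon^2}$ eigenvectors of $\sigma$, and write $\sigma=\sigma'+\sigma''$ with $\sigma'=\Pi\sigma\Pi$. Every eigenvalue of $\sigma''$ is at most $1/K\le\varepsilon^2/r$, and $\sqrt{\sigma}=\sqrt{\sigma'}+\sqrt{\sigma''}$. By the triangle inequality and H\"older,
\[
\abs{\mathrm{F}(\rho,\sigma')-\mathrm{F}(\rho,\sigma)}\le\Abs{\sqrt{\rho}\sqrt{\sigma''}}_1\le\Abs{\sqrt{\sigma''}}_\infty\Abs{\sqrt{\rho}}_1\le\frac{\varepsilon}{\sqrt{r}}\cdot\sqrt{r}=\varepsilon,
\]
using $\Abs{\sqrt{\rho}}_1\le\sqrt{r}$ when $\mathrm{rank}(\rho)\le r$; this is \cref{lemma-1210416}. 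The problem now lives in a \emph{known} $K$-dimensional subspace, and the pipeline of \cref{thm:fidelity-main} runs essentially unchanged:
\begin{enumerate}
\item Measure $\{\Pi,I-\Pi\}$ on each of the $n$ copies.
\item Run rank-$r$ Bures tomography inside $\supp(\Pi)$ on the $m$ accepted copies, whose state $\Pi\rho\Pi/\tr(\Pi\rho)$ still has rank at most $r$.
\item Output $\sqrt{m/n}\,\mathrm{F}(\widehat{\rho},\sigma)$. The $\sqrt{m/n}$ weight compensates for having fewer accepted copies when $\tr(\Pi\rho)$ is small.
\end{enumerate}
Rank-$r$ Bures tomography in dimension $K$ costs $O(rK/\varepsilon^2)=O(r^2/\varepsilon^4)$ copies, and that is exactly where the $\varepsilon^{-4}$ comes from. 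No block-encodings, polynomial approximations of $\sqrt{x}$, or perturbation bounds for matrix square roots are needed.
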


Compared to the prior best approach to fidelity estimation \cite{UNWT25} with sample complexity $O\rbra{r^2\log^2\rbra{1/\varepsilon}/\varepsilon^4}$, \cref{thm:fidelity-ext} removes the polylogarithmic factor in $\varepsilon$. 
Combining both \cref{thm:fidelity-main,thm:fidelity-ext}, we obtain the \textit{current best} sample complexity upper bound for low-rank fidelity estimation when one of the quantum states is known.

\vspace{10pt}

In \cref{tab:cmp}, we compare our results with previous approaches. 
Further applications and extensions of \cref{thm:fidelity-main} will be discussed later in \cref{sec:app,sec:related-work}, respectively. 

\begin{table}[!htp]
    \centering
    \caption{Sample complexity of mixed-state fidelity estimation.}
    \label{tab:cmp}
    \adjustbox{max width=\textwidth}{
    \begin{tabular}{cccc}
        \toprule
        Problem & Sample Complexity & Constraints & Reference \\
        \midrule
        \multirow{4}{*}{Fidelity Estimation} & $\widetilde{O}\rbra{r^{5.5}/\varepsilon^{12}}$ & \multirow{2}{*}{Either $\rho$ or $\sigma$ is of rank $r$.} & \cite{GP22} \\
        & $O\rbra{r^{2}\log^2\rbra{1/\varepsilon}/\varepsilon^{4}}$ & & \cite{UNWT25} \\
        \cmidrule{2-4}
        & $O\rbra{r^{2}/\varepsilon^{2}}$, $\Omega\rbra{r/\varepsilon^2}$ & $\sigma$ is known and of rank $r$. & \cref{thm:fidelity-main} \\
        \cmidrule{2-4}
        & $O\rbra{r^2/\varepsilon^4}$ & $\sigma$ is known and $\rho$ is of rank $r$. & \cref{thm:fidelity-ext} \\
        \bottomrule
    \end{tabular}
    }
\end{table}

\subsection{Applications to tolerant quantum state certification} \label{sec:app}

The problem of estimating the fidelity to a reference state (\cref{prob:main}) directly relates to the tolerant version of quantum state certification \cite{BOW19}.
As shown in \cref{fig:qsc}, the exact quantum state certification with parameter $\varepsilon$ is a special case of tolerant quantum state certification with the parameters $\varepsilon_1 = 0$ and $\varepsilon_2 = \varepsilon$. 

\begin{figure}[!htp]
    \centering
\vspace{5pt}
    \noindent\fbox{
    \parbox{6.5cm}{
    \vspace{7pt}
    
    Exact quantum state certification

    \textbf{Input:} $\rho$ (unknown) and $\sigma$ (known). 

    \textbf{Output:} Whether 
    \begin{itemize}
        \item $\mathrm{F}\rbra{\rho, \sigma} = 1$, or
        \item $\mathrm{F}\rbra{\rho, \sigma} \leq 1 - \varepsilon$. 
    \end{itemize}
    
    \vspace{2pt}
    }
    }
    ~ $\xRightarrow[\textup{Version}]{\textup{Tolerant}}$ ~
    \fbox{
    \parbox{6.5cm}{
    \vspace{7pt}
    
    Tolerant quantum state certification

    \textbf{Input:} $\rho$ (unknown) and $\sigma$ (known). 

    \textbf{Output:} Whether 
    \begin{itemize}
        \item $\mathrm{F}\rbra{\rho, \sigma} \geq 1 - \varepsilon_1$, or
        \item $\mathrm{F}\rbra{\rho, \sigma} \leq 1 - \varepsilon_2$. 
    \end{itemize}
    
    \vspace{2pt}
    }
    }
    \vspace{5pt}
    \caption{Exact and tolerant quantum state certifications.}
    \label{fig:qsc}
\end{figure}

As an application, we obtain an approach to tolerant quantum state certification. 
\begin{corollary}[Tolerant quantum state certification] \label{corollary:tolerant-qsc}
    Given a known reference quantum state $\sigma$ of rank $r$ and an unknown quantum state $\rho$, whether $\mathrm{F}\rbra{\rho, \sigma} \geq 1-\varepsilon_1$ or $\mathrm{F}\rbra{\rho, \sigma} \leq 1-\varepsilon_2$ can be determined using $O\rbra{r^2/\rbra{\varepsilon_2-\varepsilon_1}^2}$ samples of $\rho$. 
\end{corollary}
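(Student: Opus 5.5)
We derive \cref{corollary:tolerant-qsc} directly from the upper bound in \cref{thm:fidelity-main}, by a threshold test on a fidelity estimate. Let $\delta = (\varepsilon_2 - \varepsilon_1)/2 > 0$; if $\varepsilon_2 \le \varepsilon_1$ the promise problem is empty or trivial, so we may assume this. The algorithm runs the estimator of \cref{thm:fidelity-main} with additive error $\delta$, obtaining a value $\tilde{F}$ with $\abs{\tilde{F} - \mathrm{F}(\rho,\sigma)} \le \delta$ with probability at least $2/3$. It then accepts (declares $\mathrm{F}(\rho,\sigma) \ge 1-\varepsilon_1$) if $\tilde{F} \ge 1 - (\varepsilon_1+\varepsilon_2)/2$, and rejects otherwise.

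Correctness is a one-line check. If $\mathrm{F}(\rho,\sigma) \ge 1-\varepsilon_1$, then $\tilde{F} \ge 1-\varepsilon_1-\delta = 1-(\varepsilon_1+\varepsilon_2)/2$, so we accept. If $\mathrm{F}(\rho,\sigma) \le 1-\varepsilon_2$, then $\tilde{F} \le 1-\varepsilon_2+\delta = 1-(\varepsilon_1+\varepsilon_2)/2$, and we must avoid accepting at the boundary. The fix is to shrink the estimation error to, say, $\delta/2$, so the two cases are separated strictly. The success probability is then at least $2/3$, and it can be boosted to $1-\eta$ by taking the median of $O(\log(1/\eta))$ repetitions if desired.

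For the sample complexity, \cref{thm:fidelity-main} with error $\delta/2 = \Theta(\varepsilon_2-\varepsilon_1)$ uses $O(r^2/\delta^2) = O(r^2/(\varepsilon_2-\varepsilon_1)^2)$ copies of $\rho$, which is the claimed bound. There is no real obstacle beyond the strict-versus-nonstrict boundary handled above. All the difficulty is inherited from the $O(r^2/\varepsilon^2)$ fidelity estimator, which we take as given from \cref{thm:fidelity-main}.
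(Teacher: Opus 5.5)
Your proposal is correct and follows the same route the paper intends: the paper gives no separate proof and presents the corollary as a direct application of \cref{thm:fidelity-main}, namely running the $O(r^2/\varepsilon^2)$ estimator at accuracy $\Theta(\varepsilon_2-\varepsilon_1)$ and thresholding at $1-(\varepsilon_1+\varepsilon_2)/2$. Your use of accuracy $(\varepsilon_2-\varepsilon_1)/4$ to make the two cases strictly separated is a valid way to handle the boundary.
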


In \cite{BOW19}, they showed that the sample complexity of (exact) quantum state certification (with respect to fidelity) is $\Theta\rbra{d/\varepsilon}$ for $d$-dimensional quantum states, in which case $\varepsilon_1 = 0$ and $\varepsilon_2 = \varepsilon$. 
In comparison, \cref{corollary:tolerant-qsc} is a tolerant generalization of the results in \cite{BOW19} and further considers the $r$-dependence for the case where the rank $r$ of the reference state is much smaller than the dimension $d$. 

\subsection{Implications to quantum query complexity} \label{sec:related-work}

In addition to sample complexity, the quantum query complexity of fidelity estimation has also been extensively studied in the literature.
In this case, query access to the state-preparation circuits of (the purifications of) the quantum states is given and the query complexity measures how many queries to the circuits are used. 
This is called the ``purified quantum query access'' model \cite{GL20}, which is now the standard quantum query model for testing quantum states. 
The quantum query complexity of fidelity estimation has also been investigated in the literature. 
\begin{itemize}
    \item For pure states, the fidelity can be estimated to within additive error $\varepsilon$ with query complexity $O\rbra{1/\varepsilon^2}$ by combining the SWAP test \cite{BCWdW01} and the quantum amplitude estimation \cite{BHMT02}, which was recently improved to the optimal $\Theta\rbra{1/\varepsilon}$ in \cite{Wan24,FW25}. 
    \item For low-rank states, the first polynomial-time quantum algorithm was given in \cite{WZC+23} with query complexity $\widetilde{O}\rbra{r^{12.5}/\varepsilon^{13.5}}$, which was later improved to $\widetilde{O}\rbra{r^{6.5}/\varepsilon^{7.5}}$ in \cite{WGL+24}, to $\widetilde{O}\rbra{r^{2.5}/\varepsilon^5}$ in \cite{GP22}, and recently to $\widetilde{O}\rbra{r/\varepsilon^2}$ in \cite{UNWT25}. 
    When the reciprocal of the smallest eigenvalue of the quantum states is $\kappa$, the query complexity of fidelity estimation was shown to be $\widetilde{O}\rbra{\kappa^4/\varepsilon}$ \cite{LWWZ25}, which was later improved to $\widetilde{O}\rbra{\kappa/\varepsilon}$ in \cite{UNWT25}, with near-optimal $\varepsilon$-dependence. 
\end{itemize}

On the other hand, the lower bound on the query complexity of fidelity estimation is trivially $\Omega\rbra{r^{1/3}+1/\varepsilon}$ (as mentioned in \cite{UNWT25}) due to the lower bounds for quantum counting \cite{BBC+01,NW99} and the uniformity testing of probability distributions \cite{CFMdW10}, which was recently improved to $\Omega\rbra{\sqrt{r/\varepsilon}+1/\varepsilon}$ in \cite{CWZ25}.
With the sample complexity lower bound in \cref{thm:fidelity-main}, we can further obtain a slightly improved quantum query complexity of fidelity estimation. 

\begin{corollary}
    The quantum query complexity of fidelity estimation is $\Omega\rbra{\sqrt{r}/\varepsilon}$. 
\end{corollary}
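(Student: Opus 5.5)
We derive the query lower bound from the sample lower bound $\Omega\rbra{r/\varepsilon^2}$ in \cref{thm:fidelity-main}, using the standard fact that a query algorithm in the purified quantum query access model can be turned into a sample algorithm with quadratic overhead. More precisely, we invoke the known sample-to-query lifting: if a property of quantum states needs $S$ samples, then any algorithm using purified query access (to a unitary $U$ preparing a purification of $\rho$, together with $U^\dagger$ and controlled versions) needs $\Omega\rbra{\sqrt{S}}$ queries. Applied with $S = \Omega\rbra{r/\varepsilon^2}$, this gives $\Omega\rbra{\sqrt{r}/\varepsilon}$.

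The steps are as follows.

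First, we state the lifting lemma. Suppose a quantum query algorithm makes $Q$ queries to a state-preparation unitary $U_\rho$ of $\rho$ and its inverse and solves a decision task. Then the same task can be solved with $O\rbra{Q^2}$ samples of $\rho$. One route is to simulate each query by density matrix exponentiation or Hamiltonian-simulation style techniques ("samplizer"). Alternatively, a direct argument works: the query algorithm's output depends on $U_\rho$ through a polynomial of degree $\le 2Q$ in the state, so the final acceptance probability is a linear functional of $\rho^{\otimes O(Q)}$ up to the usual symmetrization. I would cite the known "samplizer" result, in which $Q$ queries are simulated with $\widetilde{O}\rbra{Q^2}$ samples, or the tighter $O\rbra{Q^2}$ bound where available.

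Second, we check that the hard instances behind the $\Omega\rbra{r/\varepsilon^2}$ sample bound form a decision problem: distinguishing $\rho$ with $\mathrm{F}\rbra{\rho,\sigma}\ge c+\varepsilon$ from $\mathrm{F}\rbra{\rho,\sigma}\le c-\varepsilon$ for a known rank-$r$ $\sigma$. A query algorithm for fidelity estimation with both states given by purified access solves this, since the known $\sigma$ can be prepared at no query cost.

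Third, we combine the two steps. A $Q$-query fidelity estimator yields an $O\rbra{Q^2}$-sample one, so $Q^2 = \Omega\rbra{r/\varepsilon^2}$.

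The main obstacle is the lifting lemma's exact overhead. If only the samplizer with polylogarithmic loss is available, the bound becomes $\widetilde{\Omega}\rbra{\sqrt{r}/\varepsilon}$. To get a clean $\Omega$, I would first try the $O\rbra{Q^2}$ query-to-sample simulation known for the purified access model, which holds up to constant factors for fixed success probability. If that is unavailable, I would adapt the lower-bound instance directly to the query model: there the hard ensemble is unitarily invariant, and a polynomial-method argument on $U_\rho$ gives degree $\Omega\rbra{\sqrt{r}/\varepsilon}$.
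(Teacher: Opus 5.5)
Your main route matches the paper's: apply the sample-to-query lifting (query complexity at least the square root of sample complexity for property testing) to the decision problem behind the $\Omega(r/\varepsilon^2)$ sample bound, noting that the known $\sigma$ can be prepared without queries. Drop the ``direct argument'' aside, though: writing the acceptance probability as $\tr(M\rho^{\otimes O(Q)})$ gives no $O(Q)$-sample algorithm, because $M$ need not satisfy $0 \le M \le I$ (if it did, you would get a linear relation, which the quadratic gap in state discrimination rules out), so the cited lifting theorem is what actually carries the argument.
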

\begin{proof}
    Using the same arguments in \cite{CWZ25} based on the quantum sample-to-query lifting \cite{WZ25a,WZ25b,TWZ25}, it is known that the quantum query complexity of a property testing problem is at least the square root of the sample complexity of the same problem. 
    Therefore, the sample complexity lower bound $\Omega\rbra{r/\varepsilon^2}$ for fidelity estimation in \cref{thm:fidelity-main} immediately gives a quantum query complexity lower bound of $\Omega\rbra{\sqrt{r}/\varepsilon}$ for fidelity estimation. 
\end{proof}

\subsection{Techniques}
\label{sec:technique-overview}

\subsubsection{Reduce fidelity estimation to a low-dimensional problem}
A key structural observation is that, when the reference state $\sigma$ is known, the fidelity can be computed ``locally'' on an appropriate low-dimensional subspace. We will explicitly construct this subspace later; for now, let $\Pi$ denote the orthogonal projector onto it. 
We can show that the contribution of $\rho$ outside $\mathrm{supp}(\Pi)$ is negligible, and one can work with the compressed operator $\Pi \rho \Pi$ instead of $\rho$, i.e., 
\[\mathrm{F}(\rho,\sigma)\approx \mathrm{F}(\Pi\rho\Pi,\sigma).\]

Based on this, we use a \emph{two-stage} procedure:
\begin{enumerate}
  \item \textbf{Filtration:} Perform the binary measurement
  $\{\Pi,\, I-\Pi\}$ on copies of $\rho$,
  and estimate the acceptance probability
  $
    p := \mathrm{tr}(\Pi \rho).
  $
  \item \textbf{Local subspace tomography:} Conditioned on acceptance, the
  post-measurement state is
  \[
    \rho_\Pi := \frac{\Pi \rho \Pi}{\mathrm{tr}(\Pi \rho)} = \frac{\Pi \rho \Pi}{p},
  \]
  which lives in the (typically low-dimensional) subspace $\supp(\Pi)$.
  We then perform tomography \emph{only within this subspace} to obtain an estimator
  $\widehat{\rho}_\Pi$ that is close to $\rho_\Pi$ in an operationally meaningful metric. 
\end{enumerate}

However, this approach faces several challenges.  
First, performing standard tomography of $\rho_{\Pi}$ with guarantees in trace distance (cf.\ \cite{OW16}) does not, in general, yield an accurate estimate of the fidelity $\mathrm{F}(\rho,\sigma)$.  
Second, when $p$ is small, it becomes inefficient to obtain a sufficient number of copies of $\rho_{\Pi}$ to carry out high-precision tomography.

To address the first difficulty, we adopt the \emph{Bures distance} as the tomography metric. This choice is motivated by the fact that the Bures distance is directly related to fidelity and satisfies the triangle inequality. Consequently, a tomography guarantee in Bures distance can be translated into a bound on the resulting error in the fidelity estimate.
Specifically, we use the recent tomography result in the Bures distance in \cite{PSW25}, which removes logarithmic factors from the sample complexity of the approach in \cite{HHJ+17}. 

For the second difficulty, we note that the target fidelity depends on $p$, and when $p$ is small, the contribution of any estimation error in $\rho_\Pi$ is correspondingly attenuated. In particular, we exploit the (generalized) homogeneity of fidelity:
$\mathrm{F}(\Pi\rho\Pi,\sigma)=\sqrt{p}\,\mathrm{F}(\rho_\Pi,\sigma)$.
We then define the estimator
\[
\widehat{\mathrm{F}} \coloneqq \sqrt{\widehat{p}}\;\mathrm{F}\!\left(\widehat{\rho}_\Pi,\sigma\right),
\]
where $\widehat{p}$ is the empirical estimator of $p$. With a careful error analysis, we show that using $\widehat{p}$ effectively suppresses the impact of the estimation error in $\rho_\Pi$, ensuring that $\widehat{\mathrm{F}}$ provides an accurate estimate of $\mathrm{F}(\rho,\sigma)$.

\paragraph{Scenario I: Low-rank reference state $\sigma$, arbitrary unknown state $\rho$.}
Assume $\mathrm{rank}(\sigma)= r$ and that $\sigma$ is known classically.
We take $\Pi$ as the orthogonal projector onto $\supp(\sigma)$, so the tomography step only needs to operate in an $r$-dimensional
subspace.
This yields a sample complexity scaling
\[
n = O\!\left(\frac{r^2}{\varepsilon^2}\right),
\]
with no dependence on the ambient dimension $d$.
At a high level, the error budget splits into:
(i) concentration of $\widehat{p}$ around $p$, and
(ii) the tomography error in estimating $\rho_\Pi$ in the Bures distance, 
which is then propagated to the final estimate of the fidelity. 

\paragraph{Scenario II: Low-rank unknown state $\rho$, arbitrary reference state $\sigma$.}
When $\sigma$ is allowed to be full-rank, projecting onto $\mathrm{supp}(\sigma)$ can be high-dimensional.
Since instead $\rho$ is promised to satisfy $\mathrm{rank}(\rho)\le r$, we can construct a
\emph{dimension-reducing truncation} of $\sigma$.
Specifically, let $K := \Theta({r}/{\varepsilon^2})$ and $\Pi$ be the orthogonal projector onto the subspace spanned by the top-$K$ eigenvectors (i.e., the eigenvectors of the top-$K$ largest eigenvalues) of $\sigma$.
Then, we define
\[
\sigma' := \Pi \sigma \Pi.
\]
One can show that, under $\mathrm{rank}(\rho)\le r$, truncating $\sigma$ in this way perturbs
the fidelity by at most $O(\varepsilon)$, i.e.,
\[
\bigl|\mathrm{F}(\rho,\sigma) - \mathrm{F}(\rho,\sigma')\bigr| \le O(\varepsilon).
\]
We then estimate $\mathrm{F}(\rho,\sigma')$ using the same two-stage procedure as above.
Since tomography now runs in a $K$-dimensional subspace, with a careful error analysis, this leads to the sample complexity scaling
\[
n = O\!\left(\frac{r^2}{\varepsilon^4}\right).
\]

\subsubsection{Intuition for lower bounds}

Our lower bound $n = \Omega\rbra{r/\varepsilon^2}$ on the sample complexity of fidelity estimation with optimal $\varepsilon$-dependence holds even when $\sigma$ is known of an arbitrary rank $r$. 
The proof proceeds by reduction from a quantum hypothesis testing problem considered in \cite{CHW07} with sample complexity $\Theta\rbra{r}$, which is to determine whether an unknown quantum state has a uniform spectrum on (i) $r$ or (ii) $2r$ eigenvalues. 
In our case, we set the reference state $\sigma = \frac{1}{2r} \sum_{i=1}^{2r} \ketbra{i}{i}$. 
To determine whether an unknown quantum state $\rho$ is in case (i) or (ii), we use the procedure that generates $\Theta\rbra{S/\varepsilon^2}$ copies of $\rho_\varepsilon = \rbra{1-\varepsilon^2} \ketbra{0}{0} + \varepsilon^2 \rho$ from $S$ copies of $\rho$; this is inspired by \cite{OW21} where they considered the rank testing problem. 
Then, the two cases can be distinguished by considering the value of $\mathrm{F}\rbra{\rho_\varepsilon, \sigma}$, which can be obtained by fidelity estimation, and thus a sample complexity lower bound for fidelity estimation can be established.

\subsection{Discussion}

In this paper, we consider the problem of fidelity estimation when the reference quantum state is known. 
As a result, two quantum algorithms are proposed: one is with sample complexity $O\rbra{r^2/\varepsilon^2}$ when the reference state is of rank $r$, achieving optimal dependence on $\varepsilon$, and the other is with sample complexity $O\rbra{r^2/\varepsilon^4}$ when the unknown state is guaranteed to be of rank $r$. 
Both quantum algorithms improve the previous best results due to \cite{UNWT25} under their respective scenarios. 
As an application, we present an approach to tolerant quantum state certification, generalizing the exact quantum state certification studied in \cite{BOW19}. 
We leave some questions for future research. 

\begin{itemize}
    \item Can we close the gap between the upper bound $O\rbra{r^2/\varepsilon^2}$ and lower bound $\Omega\rbra{r/\varepsilon^2}$ in \cref{thm:fidelity-main} for fidelity estimation with a low-rank reference state? 
    Can we improve the sample complexity $O\rbra{r^2/\varepsilon^4}$ in \cref{thm:fidelity-ext} for fidelity estimation for low-rank unknown states?
    
    \item In addition to fidelity, trace distance \cite{Hel67,Hel69,Rus94} is another measure of closeness commonly used in quantum physics. The problem of quantum state certification in terms of the trace distance was also considered in \cite{BOW19}.
    Later, trace distance estimation was studied in \cite{WGL+24,WZ24b}, with the tolerant quantum state certification with respect to the trace distance as an application.
    Can we find better approaches to trace distance estimation (and thus tolerant quantum state certification) when the reference state is known?
\end{itemize}

\subsection{Independent and simultaneous work}

The independent and simultaneous work of Lowe and Tan \cite{LT26} presented a quantum algorithm that estimates the fidelity $\mathrm{F}\rbra{\rho, \sigma}$ between two unknown quantum states $\rho$ and $\sigma$ with sample complexity $O\rbra{r^2/\varepsilon^2}$, where $r$ is the maximum rank of both $\rho$ and $\sigma$.
In contrast, our results with known reference state $\sigma$ in \cref{thm:fidelity-main,thm:fidelity-ext} require prior knowledge only of the rank of either $\rho$ or $\sigma$. 

\section{Preliminaries}

\paragraph{Notations.}
Let $\ket{\psi}$ denote a complex-valued vector and $\bra{\psi}$ be its Hermitian conjugate. 
Let $\braket{\varphi}{\psi} = \bra{\varphi} \cdot \ket{\psi}$ denote the inner product of $\ket{\varphi}$ and $\ket{\psi}$. 
The trace norm of a matrix $A$ is defined as 
$\|A\|_1=\tr(\sqrt{A^\dag A})$,
i.e., the sum of singular values of $A$, where $A^\dag$ is the Hermitian conjugate of $A$.
The operator norm of a matrix $A$ is defined as 
$\Abs{A}_{\infty} = \max_{\braket{\psi}{\psi} = 1} \sqrt{\bra{\psi}A^\dag A\ket{\psi}}$, i.e., the largest singular value of $A$. 
For an Hermitian matrix $A$ with the spectral decomposition $A=\sum_{i}\lambda_i\ketbra{\psi_i}{\psi_i}$ and $\lambda_i\neq 0$, we use $\supp(A)$ to denote its support space: $\spanspace(\{\ket{\psi_i}\}_i)$. 

\subsection{Closeness measures of quantum states}

\begin{definition}[Fidelity]
The fidelity of quantum states $\rho$ and $\sigma$ is defined as
\[\mathrm{F}(\rho,\sigma)=\|\sqrt{\rho}\sqrt{\sigma}\|_1=\tr\!\left(\sqrt{\sqrt{\sigma}\rho\sqrt{\sigma}}\right)\]
\end{definition}

\begin{remark}
We also extend the definition of fidelity to partial density operators $\rho,\sigma$ (i.e., $\rho,\sigma$ are positive semidefinite operators with $\tr(\rho)\leq 1$ and $\tr(\sigma)\leq 1$).
\end{remark}

Suppose $\rho,\sigma$ are non-zero partial density operators. We can easily see that the fidelity $\mathrm{F}(\rho,\sigma)$ has the following properties.
\begin{itemize}
\item It can be calculated locally: let $\Pi$ be the orthogonal projector onto the subspace $\supp(\sigma)$, then
\[\mathrm{F}(\rho,\sigma)=\tr\!\left(\sqrt{\sqrt{\sigma}\Pi\rho\Pi\sqrt{\sigma}}\right)=\mathrm{F}(\Pi\rho\Pi,\sigma).\]
\item It is upper bounded by the square roots of traces:
\[\mathrm{F}(\rho,\sigma)= \sqrt{\tr(\rho)}\sqrt{\tr(\sigma)}\cdot \mathrm{F}\!\left(\frac{\rho}{\tr(\rho)},\frac{\sigma}{\tr(\sigma)}\right)\leq \sqrt{\tr(\rho)}\sqrt{\tr(\sigma)}.\]
\end{itemize}

\begin{definition}[Bures distance \cite{Bur69}]
The Bures distance between quantum states $\rho$ and $\sigma$ is defined as
\[\mathrm{D}_\mathrm{B}(\rho,\sigma)=\sqrt{2(1-\mathrm{F}(\rho,\sigma))}.\]
\end{definition}

The Bures distance is a valid distance measure since it satisfies the triangle inequality: for any quantum states $\rho,\sigma$ and $\eta$, we have
\[\mathrm{D}_\mathrm{B}(\rho,\sigma)+\mathrm{D}_\mathrm{B}(\sigma,\eta)\geq \mathrm{D}_\mathrm{B}(\rho,\eta).\]
Note that we define the Bures distance only for density operators (i.e., $\tr(\rho)=1$), not for proper partial density operators (i.e., $\tr(\rho)<1$).

\subsection{Quantum state tomography in Bures distance}

We need the recent result \cite{PSW25} for quantum state tomography in Bures distance, which improves the previous approach in \cite{HHJ+17} by removing the logarithmic factors in the sample complexity. This result also implies the optimal sample complexity of trace distance tomography given in \cite{OW16}. 

\begin{theorem}[{\cite[Theorem 4.1]{PSW25}}]\label{thm-12160130}
There exists a universal constant $C>0$ such that for any $\varepsilon\in (0,1)$ and $n\geq C rd/\varepsilon^2$, there exists an algorithm that uses $n$ copies of an unknown state $\rho$ with rank at most $r$ and outputs an estimate $\widehat{\rho}$ such that $\mathrm{D}_\mathrm{B}(\rho,\widehat{\rho})\leq \varepsilon$ with probability at least $99/100$.
\end{theorem}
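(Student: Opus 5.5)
The plan is to follow the Schur--Weyl / Keyl-type tomography framework of \cite{HHJ+17}, and to sharpen its error analysis so that no logarithmic factors appear. Since $\mathrm{D}_\mathrm{B}(\rho,\widehat\rho)\le\varepsilon$ is equivalent to $1-\mathrm{F}(\rho,\widehat\rho)\le\varepsilon^2/2$, it suffices to prove
\[
\E\sbra*{1-\mathrm{F}(\rho,\widehat\rho)} \le c\,\frac{rd}{n}
\]
for a universal constant $c$. Markov's inequality then gives failure probability at most $1/100$ once $n\ge C rd/\varepsilon^2$.

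\textbf{The estimator.} On $\rho^{\otimes n}$ I would perform weak Schur sampling to obtain a Young diagram $\lambda\vdash n$ with at most $d$ rows. Inside the resulting $U(d)$-irrep $V_\lambda$, I would apply Keyl's covariant POVM $\{\dim V_\lambda\,U\ketbra{T_\lambda}{T_\lambda}U^\dagger\,dU\}$, where $\ket{T_\lambda}$ is the highest-weight vector, to obtain a unitary $U$. The output is $\widehat\rho=U\,\mathrm{diag}(\bar\lambda)\,U^\dagger$, where $\bar\lambda$ is $\lambda/n$, possibly slightly regularized (e.g.\ mixed with a small multiple of the identity on the top $r$ coordinates so that no estimated eigenvalue is zero). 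Because $\rho$ has rank $\le r$, $\lambda$ has at most $r$ nonzero rows, so $\widehat\rho$ also has rank $\le r$.

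\textbf{Key steps, in order.}
\begin{enumerate}
    \item Reduce fidelity to a computable quantity. Use $\mathrm{F}(\rho,\widehat\rho)\ge\tr(\rho^{1/2}\widehat\rho^{1/2})$, or equivalently $1-\mathrm{F}\le\frac12\Abs{\sqrt\rho-\sqrt{\widehat\rho}}_2^2$. This converts the problem into bounding the expected squared Hilbert--Schmidt distance between square roots.
    \item Expand the square-root distance in the eigenbasis of $\rho$. It splits into a spectral term $\sum_i(\sqrt{\alpha_i}-\sqrt{\bar\lambda_i})^2$ and an eigenbasis-misalignment term of the form $\sum_{i,j}\sqrt{\bar\lambda_i}\,\abs{\bra{e_j}U\ket{i}}^2(\cdots)$.
    \item Bound the spectral term. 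Known bounds on the empirical Young diagram (the Hellinger/$\chi^2$-type estimates of O'Donnell--Wright) give $\E\sum_i(\sqrt{\alpha_i}-\sqrt{\bar\lambda_i})^2=O(r^2/n)\le O(rd/n)$.
    \item Bound the misalignment term. Via the Keyl POVM density, $\E\abs{\bra{e_j}U\ket{i}}^2$ is controlled by ratios of Schur polynomials, i.e.\ the probability that a row-$i$ box ``lives'' on coordinate $j$. For $i\le r$ and $j$ in the kernel, the relevant probability is about $\alpha_j/\alpha_i$-weighted, summing to $O(d/n)$ per row. This gives $O(rd/n)$ in total.
\end{enumerate}

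\textbf{Main obstacle.} The hard step is step 4 for small eigenvalues $\alpha_i\ll 1/n$. There the naive per-row bound $O(d/(n\alpha_i))$, summed over a spread-out spectrum, produces the $\log$ factor of \cite{HHJ+17}. My first attempt would be to group eigenvalues dyadically and exploit the $\sqrt{\bar\lambda_i}$ weight. A row whose eigenvalue is of order $\alpha_i$ contributes at most about $\alpha_i\cdot\min\{1,d/(n\alpha_i)\}$, and these contributions are summed against the constraint $\sum_i\alpha_i=1$ with at most $r$ rows, so the total should be $O(rd/n)$ rather than carry a harmonic-sum logarithm. If this bookkeeping fails, the fallback is a two-stage scheme: first use $n/2$ copies to estimate the spectrum and truncate eigenvalues below $1/n$, then use the remaining copies to estimate the eigenbasis restricted to the large eigenvalues, treating the discarded mass as $O(r/n)$ infidelity.
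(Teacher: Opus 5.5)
\textbf{There is a genuine gap, in Step 4.} Your reduction to $\E\sbra{1-\mathrm{F}}\le c\,rd/n$ plus Markov is fine. So is the bound $1-\mathrm{F}\le\tfrac12\Abs{\sqrt\rho-\sqrt{\widehat\rho}}_2^2$, and so is Step 3, granting the O'Donnell--Wright $\chi^2$ bound in effective dimension $r$. But these are the routine part. The log-free content of the theorem lives entirely in Step 4, and Step 4 is not proved.

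\emph{The Step 4 estimates are not estimates.} ``Controlled by ratios of Schur polynomials'' and ``about $\alpha_j/\alpha_i$-weighted'' do not bound anything. For kernel indices $j$ one has $\alpha_j=0$, so the second phrase literally gives zero. Beyond that:
\begin{itemize}
\item You only discuss leakage into the kernel. You say nothing about rotations between support directions with nearly equal eigenvalues, where the Keyl posterior on $U$ does not concentrate.
\item The dependence between the random $\bar\lambda$ and $U$ is never handled.
\end{itemize}

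\emph{Your account of the obstacle is also not right.} Suppose you actually had the per-row bound ``row $i$ contributes $\lesssim\alpha_i\min\{1,d/(n\alpha_i)\}$''. Then each term is at most $d/n$, and the sum over at most $r$ rows is at most $rd/n$ immediately. No dyadic grouping is needed and no harmonic sum appears. So whatever produced the logarithm in earlier analyses is not this summation. The missing lemma is that the Keyl posterior satisfies such a per-row bound at all.

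\emph{The fallback does not supply that lemma.} Dropping estimated eigenvalues below $1/n$ costs only $O(r/n)$. But ``estimating the eigenbasis restricted to the large eigenvalues'' presupposes knowing that eigenspace, so the second stage faces the same misalignment term.

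\textbf{How the paper handles it, and a way to close the gap.} The paper gives no argument of its own here. It imports the bound from \cite{PSW25}, which, to my knowledge, obtains the log-free rate by reducing to pure-state tomography rather than by refining Keyl's analysis. A self-contained version runs as follows.
\begin{itemize}
\item Let $\ket{\psi}\in\mathbb{C}^d\otimes\mathbb{C}^r$ purify $\rho$, and set $\ket{\psi_V}=(I\otimes V)\ket{\psi}$.
\item Because $\mathrm{Sym}^n(\mathbb{C}^d\otimes\mathbb{C}^r)\cong\bigoplus_\lambda V^d_\lambda\otimes V^r_\lambda$, the following procedure prepares exactly $\E_V\sbra{\rbra{\ketbra{\psi_V}{\psi_V}}^{\otimes n}}$ for Haar-random $V\in\mathrm{U}(r)$: run weak Schur sampling on $\rho^{\otimes n}$, discard the permutation register, adjoin the maximally mixed state on $V^r_\lambda$, and embed back into the symmetric subspace. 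Schur's lemma kills the cross terms between different $\lambda$.
\item The covariant pure-state estimator in dimension $D=rd$ then returns $\ket{\widehat\phi}$ with $\E\sbra{1-\abs{\braket{\widehat\phi}{\psi_V}}^2}=(D-1)/(n+D)\le rd/n$.
\item Uhlmann's theorem gives $\mathrm{F}\rbra{\tr_2\ketbra{\widehat\phi}{\widehat\phi},\rho}\ge\abs{\braket{\widehat\phi}{\psi_V}}\ge\abs{\braket{\widehat\phi}{\psi_V}}^2$.
\item Hence $\E\sbra{1-\mathrm{F}}\le rd/n$, and your Markov step finishes with $C=200$.
\end{itemize}
This avoids the eigenbasis-misalignment bookkeeping altogether. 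If you want to keep Keyl's estimator, Step 4 has to be proved rather than estimated heuristically.
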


For our purposes, we restate \cref{thm-12160130} for the case where the state is known to lie in a fixed subspace. In other words, if we have prior information that the unknown state is supported on a particular subspace, then the sample complexity required for tomography can be improved.

\begin{corollary}[Bures distance tomography restricted to a known subspace]\label{coro-1151816}
Suppose $\rho$ is an unknown $d$-dimensional rank-$r$ state with support contained in a known $d'$-dimensional subspace. Then there exists a universal constant $C>0$ such that for any $\varepsilon \in (0,1)$ and $n\geq Crd'/\varepsilon^2$, there exists an algorithm that uses $n$ copies of an unknown state $\rho$ and outputs an estimate $\widehat{\rho}$ such that $\mathrm{D}_\mathrm{B}(\rho,\widehat{\rho})\leq \varepsilon$ with probability at least $99/100$.
\end{corollary}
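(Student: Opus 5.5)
The plan is to reduce to \cref{thm-12160130} by an explicit isometry onto the known subspace. Let $V$ denote the known $d'$-dimensional subspace containing $\supp(\rho)$. Fix an orthonormal basis $\ket{v_1},\dots,\ket{v_{d'}}$ of $V$; it is computable classically since $V$ is known. Define the isometry $W=\sum_{j=1}^{d'}\ket{j}\bra{v_j}$ from $V$ to $\mathbb{C}^{d'}$, and extend it to a unitary $U$ on $\mathbb{C}^d$ that maps $V$ onto the span of the first $d'$ computational basis vectors. Since $\supp(\rho)\subseteq V$, we have $\rho=\Pi_V\rho\Pi_V$, so the state $U\rho U^\dagger$ is supported on $\spanspace\{\ket{1},\dots,\ket{d'}\}$. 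It can therefore be regarded as a $d'$-dimensional state $\rho'=W\rho W^\dagger$ of rank $r$, since conjugation by an isometry on its support preserves the rank.

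The algorithm is as follows. Apply $U$ to each of the $n$ copies of $\rho$, obtaining $n$ copies of $\rho'$ embedded in the first $d'$ levels. Then run the algorithm of \cref{thm-12160130} in dimension $d'$ on these copies. Formally, the algorithm operates on a $d'$-dimensional register. If one insists on a physical restriction to that register, one can note that measuring $\{\Pi_V,I-\Pi_V\}$ would accept with certainty, so the copies are exactly states on the $d'$-dimensional register and no information is lost. When $n\ge Crd'/\varepsilon^2$, this returns $\widehat{\rho}'$ on $\mathbb{C}^{d'}$ with $\mathrm{D}_\mathrm{B}(\rho',\widehat{\rho}')\le\varepsilon$ with probability at least $99/100$. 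We output $\widehat{\rho}=W^\dagger\widehat{\rho}'W$, which is a valid $d$-dimensional density operator supported on $V$.

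The last step checks that the Bures distance is preserved, which comes down to fidelity invariance under isometries. Since $W^\dagger W=\Pi_V$ and $WW^\dagger=I_{d'}$, we have $\sqrt{W^\dagger AW}=W^\dagger\sqrt{A}W$ for PSD $A$ on $\mathbb{C}^{d'}$. Hence
\[
\mathrm{F}(\rho,\widehat{\rho})=\Abs{W^\dagger\sqrt{\rho'}W W^\dagger\sqrt{\widehat{\rho}'}W}_1=\Abs{W^\dagger\sqrt{\rho'}\sqrt{\widehat{\rho}'}W}_1=\Abs{\sqrt{\rho'}\sqrt{\widehat{\rho}'}}_1=\mathrm{F}(\rho',\widehat{\rho}'),
\]
and so $\mathrm{D}_\mathrm{B}(\rho,\widehat{\rho})=\mathrm{D}_\mathrm{B}(\rho',\widehat{\rho}')\le\varepsilon$. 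The constant $C$ is the same as in \cref{thm-12160130}.

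There is no real obstacle here. The only points that need care are that $\rho=W^\dagger\rho' W$, which uses $\supp(\rho)\subseteq V$, and the isometric invariance of the trace norm.
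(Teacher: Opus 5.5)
Your proposal is correct and takes essentially the paper's approach. The paper gives no separate proof; it treats the corollary as \cref{thm-12160130} run inside the known $d'$-dimensional subspace. Your isometry $W$, the identity $\sqrt{W^\dagger A W}=W^\dagger\sqrt{A}\,W$, and the resulting invariance $\mathrm{F}(W^\dagger\rho' W,W^\dagger\widehat{\rho}' W)=\mathrm{F}(\rho',\widehat{\rho}')$ supply exactly the details the paper leaves implicit.
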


\section{Upper Bounds}

We state our upper bound results for fidelity estimation in the following. 

\begin{theorem}\label{thm-11242222}
Suppose $\sigma$ is a known quantum state of rank $r$, and $\rho$ is an unknown quantum state. Then, $\mathrm{F}(\rho,\sigma)$ can be estimated to within error $\varepsilon$ using $O(r^2/\varepsilon^2)$ copies of $\rho$.
\end{theorem}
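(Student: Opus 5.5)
Let $\Pi$ be the orthogonal projector onto $\supp(\sigma)$, an $r$-dimensional subspace we know classically. By the locality property in the preliminaries, $\mathrm{F}(\rho,\sigma)=\mathrm{F}(\Pi\rho\Pi,\sigma)$. Writing $p=\tr(\Pi\rho)$ and $\rho_\Pi=\Pi\rho\Pi/p$ (for $p>0$), homogeneity gives
\[
\mathrm{F}(\rho,\sigma)=\sqrt{p}\,\mathrm{F}(\rho_\Pi,\sigma).
\]
The algorithm uses $n=\Theta(r^2/\varepsilon^2)$ copies and measures each with $\{\Pi,I-\Pi\}$. Let $\widehat p=m/n$, where $m$ is the number of accepts. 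The $m$ post-measurement states are i.i.d.\ copies of $\rho_\Pi$, which is supported in the known $r$-dimensional subspace. If $m\ge m_0:=Cr^2/\delta^2$ for a threshold $\delta$ fixed below, we run \cref{coro-1151816} with $d'=r$ and rank $\le r$ to get $\widehat\rho_\Pi$ with $\mathrm{D}_\mathrm{B}(\rho_\Pi,\widehat\rho_\Pi)\le\delta$. We then output $\widehat{\mathrm F}=\sqrt{\widehat p}\,\mathrm F(\widehat\rho_\Pi,\sigma)$. If $m<m_0$, we output $\sqrt{\widehat p}\cdot\mathrm{F}(\tau,\sigma)$ for an arbitrary $\tau$, or simply $0$.

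\textbf{Translating Bures error into fidelity error.} The Bures triangle inequality gives $|\mathrm D_\mathrm B(\rho_\Pi,\sigma)-\mathrm D_\mathrm B(\widehat\rho_\Pi,\sigma)|\le\delta$. Since $\mathrm F=1-\mathrm D_\mathrm B^2/2$ and $\mathrm D_\mathrm B\le\sqrt2$, this yields $|\mathrm F(\rho_\Pi,\sigma)-\mathrm F(\widehat\rho_\Pi,\sigma)|\le\sqrt2\,\delta$. The total error then splits as
\[
|\widehat{\mathrm F}-\mathrm F|\le |\sqrt{\widehat p}-\sqrt p|+\sqrt{\widehat p}\cdot\sqrt2\,\delta .
\]
For the first term, $\widehat p$ is an empirical mean of $n$ Bernoulli($p$) variables. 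Then $|\sqrt{\widehat p}-\sqrt p|\le|\widehat p-p|/\sqrt p=O(\sqrt{p/n}/\sqrt p)=O(1/\sqrt n)$ with constant probability, by Chebyshev. When $p\lesssim 1/n$, $\sqrt{\widehat p}$ and $\sqrt p$ are both $O(1/\sqrt n)$ anyway. So $n=\Omega(1/\varepsilon^2)$ suffices for this term.

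\textbf{Main obstacle: the second term when $p$ is small.} Here we cannot afford the tomography accuracy uniformly. The key point is that tomography on $m\approx pn$ copies achieves accuracy $\delta\approx\sqrt{Cr^2/m}$. Rather than fixing $\delta$, I will run the tomography with whatever accuracy the realized $m$ allows, namely $\delta_m=\sqrt{Cr^2/m}$ (capped at $\sqrt2$). The error term becomes
\[
\sqrt{\widehat p}\,\delta_m=\sqrt{m/n}\,\sqrt{Cr^2/m}=\sqrt{Cr^2/n}=O(\varepsilon)
\]
for $n=\Theta(r^2/\varepsilon^2)$, independent of $p$. Conditioning on $m$ is legitimate because, given $m$, the accepted states are i.i.d.\ $\rho_\Pi$, so \cref{coro-1151816} applies with success probability $99/100$. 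When $m$ is too small for $\delta_m<1$, i.e.\ $m<Cr^2$, we have $\sqrt{\widehat p}<\sqrt{Cr^2/n}=O(\varepsilon)$. In that case any output in $[0,\sqrt{\widehat p}]$ is $O(\varepsilon)$-accurate, since also $\mathrm F\le\sqrt p\le\sqrt{\widehat p}+O(\varepsilon)$.

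\textbf{Finishing.} A union bound over the two constant-probability failure events, with constants rescaled, gives success probability at least $2/3$. Standard median amplification handles any desired confidence.
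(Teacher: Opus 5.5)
Your proposal is correct and follows essentially the same route as the paper: the same filter-then-tomograph algorithm with output $\sqrt{m/n}\,\mathrm{F}(\widehat\rho,\sigma)$, Chebyshev giving $|\sqrt{m/n}-\sqrt{p}|=O(1/\sqrt n)$, and the Bures triangle inequality turning the adaptive tomography error $\sqrt{C}r/\sqrt{m}$ into a fidelity error. The only difference is bookkeeping: you attach the tomography error to $\sqrt{\widehat p}$ rather than to $\sqrt{p}$, so it cancels exactly to $\sqrt{2Cr^2/n}$ and you split on $m\ge Cr^2$, whereas the paper splits on $\sqrt{p}>2\varepsilon$ and uses $\sqrt{m/n}\ge\sqrt{p}-\varepsilon$ to bound $\sqrt{p}\cdot\sqrt{C}r/\sqrt{m}$.
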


\begin{theorem}\label{thm-1131746}
Suppose $\sigma$ is a known quantum state, and $\rho$ is an unknown quantum state of rank at most $r$. Then, $\mathrm{F}(\rho,\sigma)$ can be estimated to within error $\varepsilon$ using $O(r^2/\varepsilon^4)$ copies of $\rho$.
\end{theorem}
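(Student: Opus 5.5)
The plan is to prove Theorem~\ref{thm-1131746} by reducing to the two-stage procedure of Theorem~\ref{thm-11242222}, applied to a truncated reference state. Let $\sigma=\sum_i \lambda_i\ketbra{v_i}{v_i}$ with $\lambda_1\ge\lambda_2\ge\cdots$. Set $K=\lceil c\, r/\varepsilon^2\rceil$ for a suitable constant $c$, let $\Pi$ project onto $\spanspace\{\ket{v_1},\dots,\ket{v_K}\}$, and put $\sigma'=\Pi\sigma\Pi$, a partial density operator with $\supp(\sigma')\subseteq\supp(\Pi)$. The proof has two parts: a truncation bound $\abs{\mathrm{F}(\rho,\sigma)-\mathrm{F}(\rho,\sigma')}\le\varepsilon/2$, and an estimator for $\mathrm{F}(\rho,\sigma')$ to accuracy $\varepsilon/2$ using $O(r^2/\varepsilon^4)$ copies.

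\textbf{Truncation bound.} Write $\sigma=\sigma'+\tau$ with $\tau=(I-\Pi)\sigma(I-\Pi)$, where $\Abs{\tau}_\infty\le\lambda_{K+1}\le 1/(K+1)$. Since $\mathrm{F}(\rho,\cdot)=\Abs{\sqrt\rho\sqrt{\cdot}}_1$ and $\sqrt{\sigma}=\sqrt{\sigma'}+\sqrt{\tau}$ (the two pieces have orthogonal supports), the triangle inequality gives
\[
\abs{\mathrm{F}(\rho,\sigma)-\mathrm{F}(\rho,\sigma')}\le\Abs{\sqrt\rho\sqrt\tau}_1 .
\]
The operator $\sqrt\rho\sqrt\tau$ has rank at most $r$, so $\Abs{\sqrt\rho\sqrt\tau}_1\le\sqrt r\,\Abs{\sqrt\rho\sqrt\tau}_2=\sqrt{r\tr(\rho\tau)}\le\sqrt{r\lambda_{K+1}}\le\sqrt{r/K}$. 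This is at most $\varepsilon/2$ once $c\ge4$. This step is where the promise $\mathrm{rank}(\rho)\le r$ is used.

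\textbf{Two-stage estimation of $\mathrm{F}(\rho,\sigma')$.} By locality, $\mathrm{F}(\rho,\sigma')=\mathrm{F}(\Pi\rho\Pi,\sigma')=\sqrt p\,\mathrm{F}(\rho_\Pi,\sigma')$, where $p=\tr(\Pi\rho)$ and $\rho_\Pi=\Pi\rho\Pi/p$. Note that $\rho_\Pi$ has rank at most $r$ and lives in a known $K$-dimensional subspace. The algorithm runs as follows.
\begin{enumerate}
\item Measure $\{\Pi,I-\Pi\}$ on $n$ copies of $\rho$, giving $\widehat p$ with $\abs{\widehat p-p}\lesssim\sqrt{p/n}+1/n$.
\item Run the tomography of \cref{coro-1151816} on the $m\approx pn$ accepted post-measurement copies, obtaining $\widehat\rho_\Pi$ with $\mathrm{D}_\mathrm{B}(\rho_\Pi,\widehat\rho_\Pi)\le\delta$, where $\delta\approx\sqrt{rK/(pn)}$.
\item Output $\widehat{\mathrm F}=\sqrt{\widehat p}\,\mathrm{F}(\widehat\rho_\Pi,\sigma')$.
\end{enumerate}

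\textbf{Error analysis.} Normalize $\bar\sigma'=\sigma'/\tr\sigma'$. By the triangle inequality for $\mathrm{D}_\mathrm{B}$,
\[
\abs{\mathrm F(\rho_\Pi,\bar\sigma')-\mathrm F(\widehat\rho_\Pi,\bar\sigma')}\le\delta\cdot O\bigl(\mathrm{D}_\mathrm{B}(\rho_\Pi,\bar\sigma')+\delta\bigr)=O(\delta).
\]
Hence the tomography contribution to the error is $\sqrt p\cdot O(\delta)=O(\sqrt{rK/n})$. The $\sqrt p$ prefactor cancels the $1/\sqrt{pn}$ in $\delta$, which is exactly why scaling by $\widehat p$ absorbs small acceptance probabilities. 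The contribution from $\widehat p$ is $\abs{\sqrt{\widehat p}-\sqrt p}\lesssim 1/\sqrt n$. Making both terms at most $\varepsilon/4$ requires $n=O(rK/\varepsilon^2)=O(r^2/\varepsilon^4)$.

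\textbf{Main obstacle.} I expect the bookkeeping when $p$ is tiny to be the hardest part, since then too few accepted copies exist to meet the sample requirement of \cref{coro-1151816}. I would handle this by a case split. If $\widehat p\lesssim\varepsilon^2$, output $0$; this is justified because $\mathrm F(\rho,\sigma')\le\sqrt p=O(\varepsilon)$. Otherwise, Chernoff bounds ensure $m\ge pn/2$, and the analysis above applies. A union bound over the constant failure probabilities of the two stages, followed by median amplification, completes the proof.
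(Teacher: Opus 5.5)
Your proposal is correct and follows essentially the same route as the paper. You truncate $\sigma$ to its top-$K$ eigenspace with $K=\Theta(r/\varepsilon^2)$ and bound the truncation error by $\|\sqrt\rho\sqrt{\tau}\|_1\le\sqrt{r/K}$; the paper gets this bound via H\"older with $\|\sqrt\rho\|_1\le\sqrt r$, where you use $\|X\|_1\le\sqrt r\|X\|_2$. You then run filtration plus rank-$r$ Bures tomography in the $K$-dimensional subspace and output $\sqrt{\widehat p}\,\mathrm{F}(\widehat\rho_\Pi,\sigma')$ with $n=O(rK/\varepsilon^2)=O(r^2/\varepsilon^4)$, exactly as the paper does. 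The remaining differences are cosmetic: you normalize $\sigma'$ where the paper uses $\sigma$ itself in the Bures triangle inequality (equivalent, since $\Pi$ commutes with $\sigma$), and you branch on $\widehat p$ and output $0$ where the paper case-splits on the true $p$ only in the analysis.
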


\subsection{Proof of Theorem \ref{thm-11242222}}

The algorithm is given in \cref{alg-11250952}.
\begin{algorithm}[H]
\caption{\raggedright Estimating $\mathrm{F}(\rho,\sigma)$ with low rank reference state $\sigma$}\label{alg-11250952}
    \begin{algorithmic}[1]
    \Require $n$ samples of $\rho$, classical description of rank-$r$ reference state $\sigma$.
    \Ensure An estimate of $\mathrm{F}(\rho,\sigma)$.
    \State Let $\Pi$ be the orthogonal projector onto the $r$-dimensional subspace $\supp(\sigma)$.
    \State Perform measurement $\{M_0=\Pi,M_1=I-\Pi\}$ on each of the $n$ copies of $\rho$, obtaining a sequence of measurement outcome $\{b_1,\ldots,b_n\}$ where $b_i\in\{0,1\}$.
    \State Collect all, say $m$, post-measurement states corresponding to the outcome $0$, i.e., $\left(\frac{\Pi \rho\Pi}{\tr(\Pi \rho)}\right)^{\otimes m}$.
    \State Perform the full-rank Bures distance tomography restricted to the $r$-dimensional subspace $\supp(\sigma)$ on these $m$ post-measurement states to obtain an estimate $\widehat{\rho}$.
    \State \textbf{Return} $\sqrt{\frac{m}{n}}\mathrm{F}(\widehat{\rho},\sigma)$.
    \end{algorithmic}
\end{algorithm}

Suppose $C$ is the universal constants in \cref{coro-1151816}, and without loss of generality, we assume $C\geq 10000$. Let $n=Cr^2/\varepsilon^2$, and let $\widehat{\mathrm{F}}$ denote the output of \cref{alg-11250952}. We claim that $|\widehat{\mathrm{F}}-\mathrm{F}(\rho,\sigma)|\leq O(\varepsilon)$.

Let $m \coloneqq |\{i\,|\, b_i=0\}|$. 
Note that $m$ is a random variable subject to binomial distribution with mean $n\tr(\Pi\rho)$. Thus we know that the variance $\Var(m)=n\tr(\Pi\rho)(1-\tr(\Pi\rho))\leq n\tr(\Pi\rho)$. By Chebyshev's inequality, we have
$
\Pr\left[\left|m-n\tr(\Pi\rho)\right|\geq 100 \sqrt{n\tr(\Pi\rho)}\right]\leq \frac{1}{100}.
$
This means with probability at least $99/100$, we have 
\begin{align}
\left|\frac{m}{n}-\tr(\Pi\rho)\right|\leq 100 \sqrt{\frac{\tr(\Pi\rho)}{n}}.\label{eq-12220105}
\end{align}
Now, we consider the cases where the event in \cref{eq-12220105} happens.
Then, we further have
\begin{equation}\label{eq-1201612}
\begin{split}
\left|\sqrt{\frac{m}{n}}-\sqrt{\tr(\Pi\rho)}\right|\sqrt{\tr(\Pi\rho)} & \leq \left|\sqrt{\frac{m}{n}}-\sqrt{\tr(\Pi\rho)}\right|\left|\sqrt{\frac{m}{n}}+\sqrt{\tr(\Pi\rho)}\right|  \\
&= \left|\frac{m}{n}-\tr(\Pi\rho)\right|  \\
&\leq 100\sqrt{\frac{\tr(\Pi\rho)}{n}}, 
\end{split}
\end{equation}
and thus
\begin{equation}\label{eq-12231600}
\left|\sqrt{\frac{m}{n}}-\sqrt{\tr(\Pi\rho)}\right|\leq 100\frac{1}{\sqrt{n}}\leq \frac{100\varepsilon}{\sqrt{C}r}\leq \varepsilon.
\end{equation}
Let
\[\rho_\Pi\coloneqq \frac{\Pi\rho\Pi}{\tr(\Pi\rho)},\]
and we collect all the $m$ post-measurement states corresponding to outcome $0$, which is $\rho_\Pi^{\otimes m}$.

If $\sqrt{\tr(\Pi\rho)}\leq 2\varepsilon$, then $\mathrm{F}(\rho,\sigma)=\mathrm{F}(\Pi\rho\Pi,\sigma)\leq \sqrt{\tr(\Pi\rho\Pi)}\leq 2\varepsilon$. 
We also have $\widehat{\mathrm{F}}=\sqrt{\frac{m}{n}}\mathrm{F}(\widehat{\rho},\sigma)\leq \sqrt{\frac{m}{n}}\leq \sqrt{\tr(\Pi\rho)}+\varepsilon\leq 3\varepsilon$.
Therefore, $|\widehat{\mathrm{F}}-\mathrm{F}(\rho,\sigma)|\leq 3\varepsilon$.

Otherwise, we assume 
\begin{equation}\label{eq-1131742}
\sqrt{\tr(\Pi\rho)}>2\varepsilon.
\end{equation} 
We perform the full-rank Bures distance quantum state tomography restricted to the $r$-dimensional subspace $\supp(\sigma)$ on $m$ copies of $\rho_\Pi$. 
Then, we can obtain an estimate $\widehat{\rho}$ of $\rho_\Pi$ with Bures distance error at most
\[\mathrm{D}_\mathrm{B}(\widehat{\rho},\rho_\Pi)\leq \frac{\sqrt{C}r}{\sqrt{m}}\leq \frac{\sqrt{C}r}{\left(\sqrt{\tr(\Pi\rho)}-\varepsilon\right)\sqrt{n}}=  \frac{\varepsilon}{\sqrt{\tr(\Pi\rho)}-\varepsilon}=:\varepsilon',\]
where the first inequality is by \cref{coro-1151816}, the second inequality is by \cref{eq-12231600}, and the first equality is by the definition of $n$.
This means
\[|\mathrm{D}_\mathrm{B}(\widehat{\rho},\sigma)-\mathrm{D}_\mathrm{B}(\rho_\Pi,\sigma)|\leq \mathrm{D}_\mathrm{B}(\widehat{\rho},\rho_\Pi)\leq \varepsilon'.\]
Thus,
\begin{align}
|\mathrm{F}(\widehat{\rho},\sigma)-\mathrm{F}(\rho_\Pi,\sigma)|&=\frac{1}{2}\left|\mathrm{D}_\mathrm{B}(\widehat{\rho},\sigma)^2-\mathrm{D}_\mathrm{B}(\rho_\Pi,\sigma)^2\right|\nonumber \\
&\leq \sqrt{2}\varepsilon',\label{eq-1131736}
\end{align}
where \cref{eq-1131736} uses the fact that the Bures distance is at most $\sqrt{2}$.
Therefore, we have
\begin{align}
\left|\widehat{\mathrm{F}}-\mathrm{F}(\rho,\sigma)\right|&=\left|\sqrt{\frac{m}{n}}\mathrm{F}(\widehat{\rho},\sigma)- \mathrm{F}(\Pi\rho\Pi,\sigma)\right| \nonumber\\
&=\left|\sqrt{\frac{m}{n}}\mathrm{F}(\widehat{\rho},\sigma)- \sqrt{\tr(\Pi\rho)}\mathrm{F}(\rho_\Pi,\sigma)\right| \nonumber \\
&\leq \left|\sqrt{\frac{m}{n}}- \sqrt{\tr(\Pi\rho)}\right|\mathrm{F}(\widehat{\rho},\sigma)+ \sqrt{\tr(\Pi\rho)}\left|\mathrm{F}(\widehat{\rho},\sigma)- \mathrm{F}(\rho_\Pi,\sigma)\right| \nonumber \\
&\leq \varepsilon+\sqrt{2\tr(\Pi\rho)}\varepsilon' \label{eq-1131739}\\
&= \varepsilon +\frac{\sqrt{2\tr(\Pi\rho)}\varepsilon}{\sqrt{\tr(\Pi\rho)}-\varepsilon} \nonumber \\
&\leq \varepsilon + \frac{2\sqrt{2}\varepsilon^2}{\varepsilon} \label{eq-1131743}\\
&=O(\varepsilon),\nonumber 
\end{align}
where \cref{eq-1131739} is due to \cref{eq-12231600} and \cref{eq-1131736}, \cref{eq-1131743} is due to \cref{eq-1131742}.

\subsection{Proof of Theorem \ref{thm-1131746}}
The algorithm is given in \cref{alg-1131806}.
\begin{algorithm}[H]
\caption{\raggedright Estimating $\mathrm{F}(\rho,\sigma)$ with low rank unknown state $\rho$}\label{alg-1131806}
    \begin{algorithmic}[1]
    \Require $n$ samples of $\rho$ with rank at most $r$, classical description of the reference state $\sigma$.
    \Ensure An estimate of $\mathrm{F}(\rho,\sigma)$.
    \State Set $K=r/\varepsilon^2$.
    \State Let $\Pi$ be the orthogonal projector onto the $K$-dimensional eigenspace of the top-$K$ eigenvalues of $\sigma$.
    \State Perform measurement $\{M_0=\Pi,M_1=I-\Pi\}$ on each of the $n$ copies of $\rho$, obtaining a sequence of measurement outcome $\{b_1,\ldots,b_n\}$ where $b_i\in\{0,1\}$.
    \State Collect all, say $m$, post-measurement states corresponding to the outcome $0$, i.e., $\left(\frac{\Pi \rho\Pi}{\tr(\Pi \rho)}\right)^{\otimes m}$.
    \State Perform the rank-$r$ Bures distance tomography restricted to the $K$-dimensional subspace $\supp(\Pi)$ on these $m$ post-measurement states to obtain an estimate $\widehat{\rho}$.
    \State \textbf{Return} $\sqrt{\frac{m}{n}}\mathrm{F}(\widehat{\rho},\sigma)$.
    \end{algorithmic}
\end{algorithm}

Suppose $C$ is the universal constants in \cref{coro-1151816}, and without loss of generality, we assume $C\geq 10000$. Let $n=C rK/\varepsilon^2=Cr^2/\varepsilon^4$, and let $\widehat{\mathrm{F}}$ denote the output of \cref{alg-1131806}. We claim that $|\widehat{\mathrm{F}}-\mathrm{F}(\rho,\sigma)|\leq O(\varepsilon)$.

Let $m \coloneqq |\{i\,|\, b_i=0\}|$. 
Note that $m$ is a random variable subject to binomial distribution with mean $n\tr(\Pi\rho)$. Thus we know that the variance $\Var(m)=n\tr(\Pi\rho)(1-\tr(\Pi\rho))\leq n\tr(\Pi\rho)$. By Chebyshev's inequality, we have
$
\Pr\left[\left|m-n\tr(\Pi\rho)\right|\geq 100 \sqrt{n\tr(\Pi\rho)}\right]\leq \frac{1}{100}.
$
This means with probability at least $99/100$, we have 
\begin{align}
\left|\frac{m}{n}-\tr(\Pi\rho)\right|\leq 100 \sqrt{\frac{\tr(\Pi\rho)}{n}}.\label{eq-1201606}
\end{align}
Now, we consider the cases where the event in \cref{eq-1201606} happens.
Similar to that the argument used in \cref{eq-1201612}, we have
\begin{equation}\label{eq-1201608}
\left|\sqrt{\frac{m}{n}}-\sqrt{\tr(\Pi \rho)}\right|\leq 100\frac{1}{\sqrt{n}}\leq  \varepsilon^2\leq \varepsilon.
\end{equation}
Let
\[\rho_\Pi\coloneqq \frac{\Pi\rho\Pi}{\tr(\Pi\rho)},\]
and we collect all the $m$ post-measurement states corresponding to outcome $0$, which is $\rho_\Pi^{\otimes m}$. 

Let $\lambda_1\geq \cdots \geq \lambda_d$ be the eigenvalues of $\sigma$ and $\ket{\psi_i}$ are the corresponding eigenvectors.
We define $\Lambda=\sum_{i=1}^k \lambda_i$ and
\[
\sigma'=\Pi\sigma\Pi =\sum_{i=1}^{K} \lambda_i \ketbra{\psi_i}{\psi_i}.
\]
Note that $\sigma'$ is not neccesarily a density oeprator but only a partial density operator (i.e., $\tr(\sigma')\leq 1$).

If $\sqrt{\tr(\Pi\rho)}\leq 2\varepsilon$, then $\mathrm{F}(\rho,\sigma')=\mathrm{F}(\Pi\rho\Pi,\sigma')\leq \sqrt{\tr(\Pi\rho\Pi)}\sqrt{\tr(\sigma')}\leq 2\varepsilon$.
Thus by \cref{lemma-1210416}, we know $F(\rho,\sigma)\leq 3\varepsilon$.
We also have $\widehat{\mathrm{F}}=\sqrt{\frac{m}{n}}\mathrm{F}(\widehat{\rho},\sigma)\leq \sqrt{\frac{m}{n}} \leq \sqrt{\tr(\Pi\rho)}+\varepsilon\leq 3\varepsilon$.
Therefore, we have $|\widehat{\mathrm{F}}-\mathrm{F}(\rho,\sigma)|\leq 3\varepsilon$.

Otherwise, we assume 
\begin{equation}\label{eq-1201630}
\sqrt{\tr(\Pi\rho)}>2\varepsilon.
\end{equation} 
We perform the rank-$r$ Bures distance quantum state tomography restricted to the $K$-dimensional subspace $\supp(\Pi)$ on $m$ copies of $\rho_\Pi$. 
Then, with probability at least $99/100$, we can obtain an estimate $\widehat{\rho}$ of $\rho_\Pi$ with Bures distance error at most
\[\mathrm{D}_\mathrm{B}(\widehat{\rho},\rho_\Pi)\leq \frac{\sqrt{CrK}}{\sqrt{m}}\leq \frac{\sqrt{CrK}}{\left(\sqrt{\tr(\Pi\rho)}-\varepsilon\right)\sqrt{n}}=  \frac{\varepsilon}{\sqrt{\tr(\Pi\rho)}-\varepsilon}=:\varepsilon',\]
where the first inequality is by \cref{coro-1151816}, the second inequality is by \cref{eq-1201608}, and the first equality is by the definition of $n$.
This means
\[|\mathrm{D}_\mathrm{B}(\widehat{\rho},\sigma)-\mathrm{D}_\mathrm{B}(\rho_\Pi,\sigma)|\leq \mathrm{D}_\mathrm{B}(\widehat{\rho},\rho_\Pi)\leq \varepsilon'.\]
Thus,
\begin{align}
|\mathrm{F}(\widehat{\rho},\sigma)-\mathrm{F}(\rho_\Pi,\sigma)|&=\frac{1}{2}\left|\mathrm{D}_\mathrm{B}(\widehat{\rho},\sigma)^2-\mathrm{D}_\mathrm{B}(\rho_\Pi,\sigma)^2\right|\nonumber \\
&\leq \sqrt{2}\varepsilon',\label{eq-1210433}
\end{align}
where \cref{eq-1210433} uses the fact that the Bures distance is at most $\sqrt{2}$.
Therefore, we have
\begin{align}
\left|\widehat{\mathrm{F}}-\mathrm{F}(\rho,\sigma')\right|&=\left|\sqrt{\frac{m}{n}}\mathrm{F}(\widehat{\rho},\sigma)- \mathrm{F}(\Pi\rho\Pi,\Pi\sigma\Pi)\right| \nonumber\\
&=\left|\sqrt{\frac{m}{n}}\mathrm{F}(\widehat{\rho},\sigma)- \mathrm{F}(\Pi\rho\Pi,\sigma)\right| \nonumber\\
&=\left|\sqrt{\frac{m}{n}}\mathrm{F}(\widehat{\rho},\sigma)- \sqrt{\tr(\Pi\rho)}\mathrm{F}(\rho_\Pi,\sigma)\right| \nonumber \\
&\leq \left|\sqrt{\frac{m}{n}}- \sqrt{\tr(\Pi\rho)}\right|\mathrm{F}(\widehat{\rho},\sigma)+ \sqrt{\tr(\Pi\rho)}\left|\mathrm{F}(\widehat{\rho},\sigma)- \mathrm{F}(\rho_\Pi,\sigma)\right| \nonumber \\
&\leq \varepsilon+\sqrt{2\tr(\Pi\rho)}\varepsilon' \label{eq-212343} \\
&= \varepsilon +\frac{\sqrt{2\tr(\Pi\rho)}\varepsilon}{\sqrt{\tr(\Pi\rho)}-\varepsilon} \nonumber \\
&\leq \varepsilon + \frac{2\sqrt{2}\varepsilon^2}{\varepsilon} \label{eq-212345}\\
&=O(\varepsilon),\nonumber 
\end{align}
where \cref{eq-212343} is due to \cref{eq-1201608} and \cref{eq-1210433}, and \cref{eq-212345} is due to \cref{eq-1201630}.
Then, using \cref{lemma-1210416}, we can conclude
\[\left|\widehat{\mathrm{F}}-\mathrm{F}(\rho,\sigma)\right|\leq \left|\widehat{\mathrm{F}}-\mathrm{F}(\rho,\sigma')\right|+\left|\mathrm{F}(\rho,\sigma')-\mathrm{F}(\rho,\sigma)\right|\leq O(\varepsilon).\]

\begin{lemma}\label{lemma-1210416}
Let $K=\ceil{r/\varepsilon^2}$. Let $\sigma,\rho$ be $d$-dimensional states, where $\rho$ has rank at most $r$. 
Let $\lambda_1\geq \cdots\geq \lambda_d$ be the eigenvalues of $\sigma$ and $\ket{\psi_i}$ are the corresponding eigenvectors. Define $\sigma'=\sum_{i=1}^K \lambda_i\ketbra{\psi_i}{\psi_i}$, then we have
\[|\mathrm{F}(\rho,\sigma')-\mathrm{F}(\rho,\sigma)|\leq \varepsilon.\]
\end{lemma}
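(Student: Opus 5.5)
We write $\sigma = \sigma' + \sigma''$, where $\sigma'' = \sum_{i>K}\lambda_i\ketbra{\psi_i}{\psi_i}$ and $\sigma' \perp \sigma''$. The plan is to bound $\mathrm{F}(\rho,\sigma)-\mathrm{F}(\rho,\sigma')$ from both sides by a quantity of order $\sqrt{r\lambda_{K+1}}$. The key observation is that $\lambda_{K+1}\le 1/K$: since the eigenvalues are sorted and sum to at most $1$, we have $(K+1)\lambda_{K+1}\le\sum_{i\le K+1}\lambda_i\le 1$. Hence every eigenvalue of $\sigma''$ is at most $\varepsilon^2/r$.

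\textbf{Monotonicity.} Since $\sigma'\le\sigma$, operator monotonicity of the square root gives $\mathrm{F}(\rho,\sigma')\le\mathrm{F}(\rho,\sigma)$. Indeed, $\sqrt{\rho}\sigma'\sqrt{\rho}\le\sqrt{\rho}\sigma\sqrt{\rho}$, and we use the alternative form $\mathrm{F}(\rho,\sigma)=\tr\sqrt{\sqrt\rho\sigma\sqrt\rho}$ together with the fact that $\tr\sqrt{\cdot}$ is monotone. It therefore suffices to show $\mathrm{F}(\rho,\sigma)\le\mathrm{F}(\rho,\sigma')+\varepsilon$.

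\textbf{Subadditivity.} We use $\mathrm{F}(\rho,\sigma)=\tr\sqrt{A+B}$ with $A=\sqrt\rho\sigma'\sqrt\rho$ and $B=\sqrt\rho\sigma''\sqrt\rho$, both PSD. We need a subadditivity inequality $\tr\sqrt{A+B}\le\tr\sqrt{A}+\tr\sqrt{B}$. This holds by Rotfel'd's inequality for the concave function $t\mapsto\sqrt t$ with $f(0)=0$. Alternatively, it can be obtained via the variational/trace-norm form: $\mathrm{F}(\rho,\sigma)=\|\sqrt\rho\sqrt\sigma\|_1$ and $\sqrt\sigma=\sqrt{\sigma'}+\sqrt{\sigma''}$ by orthogonality of supports, so the triangle inequality for the trace norm gives $\mathrm{F}(\rho,\sigma)\le\|\sqrt\rho\sqrt{\sigma'}\|_1+\|\sqrt\rho\sqrt{\sigma''}\|_1=\mathrm{F}(\rho,\sigma')+\mathrm{F}(\rho,\sigma'')$. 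This second route is cleaner, and I will use it.

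\textbf{Bounding $\mathrm{F}(\rho,\sigma'')$ via the rank of $\rho$.} The matrix $\sqrt\rho\sqrt{\sigma''}$ has rank at most $r$, so by Cauchy--Schwarz on singular values,
\[
\|\sqrt\rho\sqrt{\sigma''}\|_1\le\sqrt{r}\,\|\sqrt\rho\sqrt{\sigma''}\|_2=\sqrt{r\,\tr(\rho\sigma'')}\le\sqrt{r\,\|\sigma''\|_\infty\tr\rho}\le\sqrt{r\lambda_{K+1}}\le\sqrt{r/K}\le\varepsilon.
\]
Combining the three steps gives $0\le\mathrm{F}(\rho,\sigma)-\mathrm{F}(\rho,\sigma')\le\varepsilon$.

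The main obstacle is the third step: getting a bound that uses the rank of $\rho$ rather than the dimension. The rank-$r$ Cauchy--Schwarz on the trace norm is exactly what trades the $d$-dependence for $r$, and combined with $\lambda_{K+1}\le1/K$ this yields the required choice $K=\lceil r/\varepsilon^2\rceil$. The remaining points to check are routine: that $\sqrt\sigma$ splits additively across orthogonal spectral pieces, and that the lemma holds as stated for partial density operators.
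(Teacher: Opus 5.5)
Your proof is correct and takes essentially the paper's route: split $\sqrt{\sigma}=\sqrt{\sigma'}+\sqrt{\sigma''}$, apply the triangle inequality for $\|\cdot\|_1$, and bound $\|\sqrt{\rho}\sqrt{\sigma''}\|_1$ by $\sqrt{r\lambda_{K+1}}\le\varepsilon$. The paper does this last step via H\"older as $\|\sqrt{\sigma''}\|_\infty\|\sqrt{\rho}\|_1\le(\varepsilon/\sqrt{r})\sqrt{r}$ rather than by your rank-$r$ Cauchy--Schwarz on singular values, which is the same estimate; your monotonicity step is also correct but unnecessary, since the reverse triangle inequality $\bigl|\|a\|_1-\|a+b\|_1\bigr|\le\|b\|_1$ already gives both directions at once.
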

\begin{proof}
Let $\sigma''=\sigma-\sigma'$. Note that for any $i\geq K+1$, we have $\lambda_{i}\leq 1/K\leq \varepsilon^2/r$. Therefore, $\|\sigma''\|_\infty\leq \varepsilon^2/r$. This means
\begin{align}
|\mathrm{F}(\rho,\sigma')-\mathrm{F}(\rho,\sigma)|&=\left|\left\|\sqrt{\rho}\sqrt{\sigma'}\right\|_1-\left\|\sqrt{\rho}\sqrt{\sigma}\right\|_1\right| \nonumber \\
&=\left|\left\|\sqrt{\rho}\sqrt{\sigma'}\right\|_1-\left\|\sqrt{\rho}\sqrt{\sigma'}+\sqrt{\rho}\sqrt{\sigma''}\right\|_1\right|\nonumber \\
&\leq \|\sqrt{\rho}\sqrt{\sigma''}\|_1\nonumber \\
&\leq (\varepsilon/\sqrt{r})\cdot  \|\sqrt{\rho}\|_1 \label{eq-1210412} \\
&\leq \varepsilon, \label{eq-1210413}
\end{align}
where \cref{eq-1210412} is by the matrix H\"older inequality $\|AB\|_1\leq \|A\|_\infty\|B\|_1$ and \cref{eq-1210413} is because $\rho$ is of rank at most $r$.
\end{proof}

\section{Lower Bounds}

\begin{theorem}\label{thm-240417}
    For any $\varepsilon \in \rbra{0, \frac{1}{10}}$ and any integer $r \geq 1$, any quantum algorithm that estimates the fidelity $\mathrm{F}\rbra{\rho, \sigma}$ to within additive error $\varepsilon$ for a known quantum state $\sigma$ of rank $r$ and an unknown quantum state $\rho$ requires $\Omega\rbra{\frac{r}{\varepsilon^2}}$ samples of $\rho$. 
\end{theorem}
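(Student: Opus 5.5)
We reduce from the quantum hypothesis testing problem of \cite{CHW07}, following the outline in \cref{sec:technique-overview}. Fix an integer $s \geq 1$ and work in the space spanned by $\ket{0},\ket{1},\ldots,\ket{2s}$. The source problem is as follows. We are given copies of an unknown state $\tau$ supported on $\spanspace\{\ket{1},\ldots,\ket{2s}\}$, in an unknown basis, and are promised that either (i) $\tau$ has spectrum uniform on $s$ eigenvalues (each $1/s$), or (ii) $\tau$ has spectrum uniform on $2s$ eigenvalues (each $1/(2s)$). We take from \cite{CHW07} that distinguishing (i) from (ii) with success probability at least $2/3$ requires $\Omega(s)$ copies of $\tau$. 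This holds because both families are unitarily invariant, so the problem depends only on Schur–Weyl/Young-diagram statistics. As the reference state we take $\sigma = \frac{1}{2s}\sum_{i=1}^{2s}\ketbra{i}{i}$, which has rank $r = 2s$. For odd $r$ we pad with one extra eigenvector, which changes only constants.

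\textbf{Step 1: the fidelity gap.} For $\delta\in(0,1]$ set $\tau_\delta = (1-\delta)\ketbra{0}{0} + \delta\tau$. Since $\ket{0}\perp\supp(\sigma)$, the locality and homogeneity properties from the preliminaries give $\mathrm{F}(\tau_\delta,\sigma) = \sqrt{\delta}\,\mathrm{F}(\tau,\sigma)$. Since $\tau$ is supported in $\supp(\sigma)$ and $\sigma$ is maximally mixed there, $\mathrm{F}(\tau,\sigma) = \tr(\sqrt{\tau})/\sqrt{2s}$. This equals $\sqrt{s}/\sqrt{2s} = 1/\sqrt2$ in case (i) and $1$ in case (ii). The two cases therefore differ in fidelity by $\sqrt\delta\,(1-1/\sqrt2)$. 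We choose $\sqrt\delta = c\varepsilon$ with a constant $c$ such that this gap exceeds $2\varepsilon$; for instance $c = 8$ gives $\delta = 64\varepsilon^2 < 1$ because $\varepsilon<1/10$. An $\varepsilon$-accurate fidelity estimator applied to copies of $\tau_\delta$ then decides between (i) and (ii).

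\textbf{Step 2: simulating copies of $\tau_\delta$.} Suppose a fidelity estimator uses $N$ copies and succeeds with probability at least $0.9$, after standard amplification by taking medians. We produce each of the $N$ copies of $\tau_\delta$ independently: with probability $\delta$ we take a fresh copy of $\tau$, and otherwise we prepare $\ket{0}$. This yields exactly $\tau_\delta^{\otimes N}$. The number of copies of $\tau$ consumed is $\mathrm{Bin}(N,\delta)$, with mean $N\delta$. We cap the budget at $S = 20N\delta$ and abort (guessing arbitrarily) if the cap is exceeded. By Markov's inequality this happens with probability at most $1/20$. The resulting tester for (i) versus (ii) then uses at most $S = O(N\varepsilon^2)$ copies and succeeds with probability at least $0.9 - 0.05 > 2/3$.

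\textbf{Step 3: conclusion.} By \cite{CHW07}, $S=\Omega(s)$, hence $N = \Omega(s/\varepsilon^2) = \Omega(r/\varepsilon^2)$. The main obstacle is Step 2's dependence on the precise form of the \cite{CHW07} lower bound: it must be a bound for constant (not vanishing) error probability in the regime where the state is promised to lie in a known $2s$-dimensional subspace. I would verify this by rederiving it through Schur sampling. The optimal test is a function of the Young diagram $\lambda\vdash S$. For $S = o(s)$, the Schur–Weyl distributions induced by the spectra $(1/s)^{s}$ and $(1/(2s))^{2s}$ both concentrate on diagrams whose first row is $1 + o(S)$, that is, essentially on near-column diagrams. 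Their total variation distance is then bounded by the probability of observing a repeated "color" in the associated classical collision process, which is $O(S^2/s)$. This argument gives $\Omega(\sqrt{s})$ directly, and the finer analysis in \cite{CHW07} improves it to the $\Theta(s)$ bound that I would cite.
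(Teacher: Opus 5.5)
Your proposal is correct and follows essentially the paper's reduction: the same \textsc{UnifDis} instance from \cite{CHW07}, the same reference state $\sigma=\frac{1}{2s}\sum_{i=1}^{2s}\ketbra{i}{i}$, the same dilution $(1-\delta)\ketbra{0}{0}+\delta\tau$ with $\delta=\Theta(\varepsilon^2)$ giving fidelities $\sqrt{\delta}/\sqrt{2}$ versus $\sqrt{\delta}$, and the same coin-flip simulation of $\tau_\delta^{\otimes N}$. The only difference is bookkeeping: you cap the copies consumed at $20N\delta$ and abort otherwise, giving a randomized tester that the \cite{CHW07} bound already covers, whereas the paper derandomizes by fixing a low-weight string $x^*$ --- your version is slightly cleaner because it never needs one $x^*$ that is good in both cases at once, and you also state explicitly that $\sigma$ has rank $2s$, which the paper leaves implicit.
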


\begin{proof}
    \textbf{The hard problem for reduction.}
    We consider the following problem $\textsc{UnifDis}_r$ for an integer $r \geq 1$, which was initially studied in \cite{CHW07}. Given an unknown $2r$-dimensional quantum state $\rho$, determine whether:
    \begin{enumerate}
        \item $\rho$ has spectrum $\rbra{\underbrace{\tfrac{1}{r}, \dots, \tfrac{1}{r}}_{r}, \underbrace{0, \dots, 0}_{r}}$, or \label{item:case1}
        \item $\rho$ is the maximally mixed state $\frac{I}{2r}$, i.e., with spectrum $\rbra{\underbrace{\tfrac{1}{2r}, \dots, \tfrac{1}{2r}}_{2r}}$. \label{item:case2}
    \end{enumerate}
    According to \cite[Theorem 3]{CHW07} (see also \cite[Theorem 1.9]{OW21}), any quantum algorithm that solves the problem $\textsc{UnifDis}_r$ requires $\Omega\rbra{r}$ samples of $\rho$ for any integer $r \geq 1$. 
    Our proof is inspired by the quantum sample lower bound for rank testing given in \cite[Section 6.2]{OW21}, where they also reduced the problem of rank testing to the problem $\textsc{UnifDis}_r$. 

    \textbf{Formalism of fidelity estimation.} We will establish a sample lower bound for fidelity estimation by reducing the problem $\textsc{UnifDis}_r$. 
    To this end, suppose there is a quantum algorithm $\mathcal{A}$ that given a known quantum state $\sigma$ of rank $r$ and an unknown quantum state $\rho$, computes the fidelity $\mathrm{F}\rbra{\rho, \sigma}$ to within additive error $\varepsilon$ using $\mathsf{S}\rbra{r, \varepsilon}$ samples of $\rho$. 
    Formally, let $\mathcal{A}_\sigma\rbra{\rho^{\otimes \mathsf{S}\rbra{r, \varepsilon}}}$ denote the output real random number of the algorithm $\mathcal{A}$ on input $\mathsf{S}\rbra{r, \varepsilon}$ identical copies of $\rho$, which satisfies
    \begin{equation} \label{eq:def-fid}
        \Pr\sbra*{ \abs*{\mathcal{A}_\sigma\rbra{\rho^{\otimes \mathsf{S}\rbra{r, \varepsilon}}} - \mathrm{F}\rbra{\rho, \sigma}} \leq \varepsilon } \geq 0.99.
    \end{equation}
    Here, $\mathcal{A}_\sigma\rbra{\cdot}$ can be described as a quantum channel $\mathcal{E}$ followed by a computational basis measurement $\cbra{M_{m} = \ketbra{m}{m} \otimes I}$ such that on input quantum state $\varrho$, $\mathcal{A}_\sigma\rbra{\cdot}$ outputs a random real number $m$ with probability
    \begin{equation} \label{eq:def-algo-A}
        \Pr\sbra{\mathcal{A}_\sigma\rbra{\varrho} = m} = \tr\rbra*{M_m \mathcal{E}\rbra{\varrho} M_m^\dag}. 
    \end{equation}
    
    \textbf{A protocol for $\textsc{UnifDis}_r$ based on fidelity estimation.} For our purpose, we consider the fidelity estimation for $\rbra{2r+1}$-dimensional quantum states, with the computational basis $\cbra{\ket{0}, \ket{1}, \dots, \ket{2r}}$. First, set $\sigma$ to the following $\rbra{2r+1}$-dimensional quantum state:
    \[
    \sigma = \sum_{i=1}^{2r} \frac{1}{2r}\ketbra{i}{i}.
    \]
    For an unknown $2r$-dimensional quantum state $\rho$, we can determine whether it is in Case \ref{item:case1} or in Case \ref{item:case2} with the following protocol $\mathcal{P}_{r,\varepsilon}$ with $\varepsilon \in \rbra{0, 1}$ an arbitrary parameter in \cref{algo:protocol-1}. 

    \begin{algorithm}[H]
\caption{Protocol $\mathcal{P}_{r, \varepsilon}$ with parameters $\varepsilon \in \rbra{0, 1}$ and rank $r \geq 1$}\label{algo:protocol-1}
    \begin{algorithmic}[1]
    \Require Samples of an unknown quantum state $\rho$. 
    \Ensure Whether $\rho$ is in Case \ref{item:case1} or in Case \ref{item:case2}.
    \State Let $S \gets \mathsf{S}\rbra{r, \frac{\varepsilon}{10}}$.
    \State Let $x \in \cbra{0, 1}^S$ be a random binary string with each $x_i$ independent and $\Pr\sbra{x_i = 0} = 1-\varepsilon^2$. 
    \State Compute $\hat{F} \gets \mathcal{A}_\sigma\rbra{\varrho\rbra{x}}$, where 
        \[
        \varrho\rbra{x} = \bigotimes_{i=1}^S \varrho_i,
        \]
        with $\varrho_i = \ketbra{0}{0}$ if $x_i = 0$ and $\varrho_i = \rho$ if $x_i = 1$. 
    \State \textbf{Return} Case \ref{item:case1} if $\hat{F}_x < 0.85 \varepsilon$ and Case \ref{item:case2} otherwise. 
    \end{algorithmic}
\end{algorithm}

    \textbf{Correctness of the protocol $\mathcal{P}_{r, \varepsilon}$.}
    To show the correctness of the protocol $\mathcal{P}_{r, \varepsilon}$, we first note that the probability distribution of $\mathcal{A}_\sigma\rbra{\E_x\sbra{\varrho\rbra{x}}}$ is equal to the probability distribution of $\mathcal{A}_\sigma\rbra{\varrho\rbra{x}}$ when $x$ is uniformly random, i.e., for any possible output $m$, 
    \begin{equation} \label{eq:prob-eq}
        \Pr\sbra*{\mathcal{A}_\sigma\rbra*{\E_x\sbra{\varrho\rbra{x}}} = m} = \Pr_x \sbra*{ \mathcal{A}_\sigma\rbra{\varrho\rbra{x}} = m }. 
    \end{equation}
    To see this, by \cref{eq:def-algo-A}, we have 
    \begin{align*}
        \Pr\sbra*{\mathcal{A}_\sigma\rbra*{\E_x\sbra{\varrho\rbra{x}}} = m}
        & = \tr\rbra*{M_m \mathcal{E}\rbra*{\E_x\sbra{\varrho\rbra{x}}} M_m^\dag} \\
        & = \E_x\sbra*{ \tr\rbra*{M_m \mathcal{E}\rbra*{\varrho\rbra{x}} M_m^\dag} } \\
        & = \E_x\sbra*{\Pr\sbra{\mathcal{A}_\sigma\rbra{\varrho\rbra{x}} = m}} \\
        & = \Pr_x \sbra*{\mathcal{A}_\sigma\rbra{\varrho\rbra{x}} = m}.
    \end{align*}
    Let $\rho_\varepsilon = \rbra{1-\varepsilon^2} \ketbra{0}{0} + \varepsilon^2 \rho = \E_x \sbra{\varrho_i}$ for each $1 \leq i \leq S$. 
    Then,
    \begin{equation} \label{eq:prob-hat-F}
        \Pr_x \sbra*{\abs*{\mathcal{A}_\sigma\rbra{\varrho\rbra{x}}-\mathrm{F}\rbra{\rho_\varepsilon, \sigma}} \leq \frac{\varepsilon}{10}} \geq 0.99.
    \end{equation}
    To see this, note that $\E_x \sbra{\varrho\rbra{x}} = \rho_\varepsilon^{\otimes S}$, and we further have
    \begin{align}
        \Pr_x \sbra*{\abs*{\mathcal{A}_\sigma\rbra{\varrho\rbra{x}}-\mathrm{F}\rbra{\rho_\varepsilon, \sigma}} \leq \frac{\varepsilon}{10}} 
        & = \sum_{m \colon \abs{m - \mathrm{F}\rbra{\rho_\varepsilon, \sigma}} \leq \frac{\varepsilon}{10}} \Pr_x \sbra{\mathcal{A}_\sigma\rbra{\varrho\rbra{x}}=m} \nonumber \\
        & = \sum_{m \colon \abs{m - \mathrm{F}\rbra{\rho_\varepsilon, \sigma}} \leq \frac{\varepsilon}{10}} \Pr\sbra*{\mathcal{A}_\sigma\rbra*{\E_x\sbra{\varrho\rbra{x}}} = m} \label{eq:use-def-prob} \\
        & = \Pr\sbra*{\abs*{\mathcal{A}_\sigma\rbra*{\E_x\sbra{\varrho\rbra{x}}}-\mathrm{F}\rbra{\rho_\varepsilon, \sigma}} \leq \frac{\varepsilon}{10}} \nonumber \\
        & = \Pr\sbra*{\abs*{\mathcal{A}_\sigma\rbra*{\rho_\varepsilon^{\otimes S}}-\mathrm{F}\rbra{\rho_\varepsilon, \sigma}} \leq \frac{\varepsilon}{10}} \nonumber \\
        & \geq 0.99, \label{eq:by-def-fid}
    \end{align}
    where \cref{eq:use-def-prob} uses \cref{eq:prob-eq} and \cref{eq:by-def-fid} is by \cref{eq:def-fid}. 

    Now we are ready to show the correctness of the protocol $\mathcal{P}_{r, \varepsilon}$. 
    \begin{enumerate}
        \item If $\rho$ is in Case \ref{item:case1}, then \begin{align*}
            \mathrm{F}\rbra{\rho_\varepsilon, \sigma} = r \cdot \sqrt{\frac{\varepsilon^2}{r} \cdot \frac{1}{2r}} = \frac{\varepsilon}{\sqrt{2}}. 
        \end{align*}
        By \cref{eq:prob-hat-F}, 
        \[
        \Pr_x \sbra*{\abs*{\mathcal{A}_\sigma\rbra{\varrho\rbra{x}} - \frac{\varepsilon}{\sqrt{2}}} \leq \frac{\varepsilon}{10}} \geq 0.99,
        \]
        which gives
        \begin{equation} \label{eq:prob-case-1}
            \Pr_x \sbra* {\mathcal{A}_\sigma\rbra{\varrho\rbra{x}} < 0.85\varepsilon} \geq 0.99.
        \end{equation}
        
        \item If $\rho$ is in Case \ref{item:case2}, then \begin{align*}
            \mathrm{F}\rbra{\rho_\varepsilon, \sigma} = 2r \cdot \sqrt{\frac{\varepsilon^2}{2r} \cdot \frac{1}{2r}} = \varepsilon. 
        \end{align*}
        By \cref{eq:prob-hat-F}, 
        \[
        \Pr_x \sbra*{\abs*{\mathcal{A}_\sigma\rbra{\varrho\rbra{x}} - \varepsilon} \leq \frac{\varepsilon}{10}} \geq 0.99,
        \]
        which gives
        \begin{equation} \label{eq:prob-case-2}
        \Pr_x \sbra*{\mathcal{A}_\sigma\rbra{\varrho\rbra{x}} > 0.85\varepsilon} \geq 0.99.
        \end{equation}
    \end{enumerate}
    Therefore, the protocol $\mathcal{P}_{r, \varepsilon}$ solves the problem $\textsc{UnifDis}_r$ with success probability at least $0.99$. 

    \textbf{Derandomization of the protocol $\mathcal{P}_{r, \varepsilon}$.} Now we look into the details of the protocol $\mathcal{P}_{r, \varepsilon}$ regarding its randomness. 
    Note that $\E_x \sbra{\Abs{x}_1} = S\varepsilon^2$, where $\Abs{x}_1 = \sum_{i=1}^S \abs{x_i}$. 
    By Markov's inequality, $\Pr\sbra{\Abs{x}_1 \leq 100 S \varepsilon^2} \geq 0.99$. 
    Therefore, there must exist an $x^* \in \cbra{0, 1}^S$ with $\Abs{x^*}_1 \leq 100S\varepsilon^2$ such that one can solve the problem $\textsc{UnifDis}_r$ with success probability at least $0.9$ using the output random number $\mathcal{A}_\sigma\rbra{\varrho\rbra{x^*}}$, i.e., both the following conditions hold:
    \begin{align*}
        \Pr\sbra{\mathcal{A}_\sigma\rbra{\varrho\rbra{x^*}} < 0.85\varepsilon \mid \rho \textup{ is in Case \ref{item:case1}}} & \geq 0.9, \\
        \Pr\sbra{\mathcal{A}_\sigma\rbra{\varrho\rbra{x^*}} > 0.85\varepsilon \mid \rho \textup{ is in Case \ref{item:case2}}} & \geq 0.9.
    \end{align*}
    Otherwise, if the first condition does not hold, then when $\rho$ is in Case \ref{item:case1}, 
    \begin{align*}
        \Pr_x \sbra{\mathcal{A}_\sigma\rbra{\varrho\rbra{x}} < 0.85\varepsilon} 
        & = \sum_{\Abs{u}_1 \leq 100 S \varepsilon^2} \Pr_x \sbra{x = u} \cdot \Pr \sbra{\mathcal{A}_\sigma\rbra{\varrho\rbra{u}} < 0.85\varepsilon} \;+ \\
        & \qquad \sum_{\Abs{u}_1 > 100 S \varepsilon^2} \Pr_x \sbra{x = u} \cdot \Pr \sbra{\mathcal{A}_\sigma\rbra{\varrho\rbra{u}} < 0.85\varepsilon} \\
        & < \Pr_x \sbra{\Abs{x}_1 \leq 100S\varepsilon^2} \cdot 0.9 + \Pr_x \sbra{\Abs{x}_1 > 100S\varepsilon^2} \cdot 1 \\
        & < 0.9 + 0.01 = 0.91,
    \end{align*}
    which contradicts \cref{eq:prob-case-1}, and, similarly, if the second condition does not hold, then when $\rho$ is in Case \ref{item:case2},
    \begin{align*}
        \Pr_x \sbra{\mathcal{A}_\sigma\rbra{\varrho\rbra{x}} > 0.85\varepsilon} < 0.91,
    \end{align*}
    which contradicts \cref{eq:prob-case-2}. 

    Now we provide a protocol $\mathcal{P}'_{r,\varepsilon}$ in \cref{algo:protocol-2} based on the previous $\mathcal{P}_{r,\varepsilon}$. 

    \begin{algorithm}[H]
\caption{Protocol $\mathcal{P}'_{r, \varepsilon}$ with parameters $\varepsilon \in \rbra{0, 1}$ and rank $r \geq 1$}\label{algo:protocol-2}
    \begin{algorithmic}[1]
    \Require Samples of an unknown quantum state $\rho$. 
    \Ensure Whether $\rho$ is in Case \ref{item:case1} or in Case \ref{item:case2}.
    \State Let $S \gets \mathsf{S}\rbra{r, \frac{\varepsilon}{10}}$.
    \State Let $x^* \in \cbra{0, 1}^S$ be the binary string defined above such that $\Abs{x^*}_1 \leq 100S\varepsilon^2$. 
    \State Compute $\hat{F} \gets \mathcal{A}_\sigma\rbra{\varrho\rbra{x^*}}$.
    \State \textbf{Return} Case \ref{item:case1} if $\hat{F} < 0.85 \varepsilon$ and Case \ref{item:case2} otherwise. 
    \end{algorithmic}
\end{algorithm}
    
    We have already shown that the protocol $\mathcal{P}'_{r,\varepsilon}$ solves the problem $\textsc{UnifDis}_r$ with success probability at least $0.9$. 
    On the other hand, the protocol $\mathcal{P}'_{r,\varepsilon}$ uses $\Abs{x^*}_1 \leq 100S\varepsilon^2$ samples of $\rho$, and thus $100 S \varepsilon^2 \geq \Omega\rbra{r}$, 
    which gives 
    \[
    S = \mathsf{S}\rbra*{r, \frac{\varepsilon}{10}} \geq \Omega\rbra*{\frac{r}{\varepsilon^2}}. 
    \]
    These complete the proof as $\varepsilon \in \rbra{0, 1}$ is arbitrary. 
\end{proof}

\section*{Acknowledgment}

The author sincerely thanks Kean Chen for discussions at an early stage of this work.

\addcontentsline{toc}{section}{References}

\bibliographystyle{alphaurl}
\bibliography{main}

\end{document}